\documentclass[11pt]{article}


\usepackage{fullpage}

\usepackage[utf8]{inputenc}
\usepackage[english]{babel}

\usepackage{dsfont}
\usepackage{mathpazo}


\usepackage{titlesec}
\usepackage{authblk}
\usepackage{delimset}

\usepackage[hyphens]{url}
\usepackage[hypertexnames=false,colorlinks=true,linkcolor=blue,citecolor=black,urlcolor=black]{hyperref}
\usepackage{breakurl}

\bibliographystyle{alphaurl}

\usepackage{upgreek}
\usepackage{bbold}

\usepackage{amsmath, amsfonts, amssymb}
\usepackage{mathtools}
\usepackage{braket}

\usepackage{array}


\def\squareforqed{\hbox{\rlap{$\sqcap$}$\sqcup$}}
\def\qed{\ifmmode\squareforqed\else{\unskip\nobreak\hfil
\penalty50\hskip1em\null\nobreak\hfil\squareforqed
\parfillskip=0pt\finalhyphendemerits=0\endgraf}\fi}

\newtheorem{theorem}{Theorem}
\newtheorem{lemma}[theorem]{Lemma}
\newtheorem{fact}[theorem]{Fact}
\newtheorem{definition}[theorem]{Definition}
\newenvironment{proof}{\begin{trivlist}\item[]{\flushleft\bf Proof }}
{\qed\end{trivlist}}

\newcommand{\Cn}{\mathbb{C}}

\newcommand{\lqw}{\mathsf{A}_\mathrm{lazy}} 
\newcommand{\vlqw}{\widehat{\mathsf{A}}_\mathrm{lazy}} 
\newcommand{\opw}{\mathsf{W}} 
\newcommand{\opg}{\mathsf{G}} 
\newcommand{\opshift}{\mathsf{S}_\mathrm{ff}} 
\newcommand{\opswap}{\mathsf{SWAP}} 
\newcommand{\opc}{\mathsf{C}}
\newcommand{\opid}{\mathsf{I}}
\newcommand{\opp}{\mathsf{P}}
\newcommand{\oplw}{\widehat{\opp}}
\newcommand{\opref}{\textup{Ref}}
\newcommand{\iw}{\opp(s)}
\newcommand{\dm}[1]{\mathsf{D}(#1)}
\newcommand{\qiw}{\opw(\opp(s))} 
\newcommand{\initlazy}{\ket{\overline{init}_\mathrm{lazy}}}
\newcommand{\initip}{\ket{\overline{init}_\mathrm{ip}}}
\newcommand{\initlip}{\ket{\overline{init}_{\widehat{\mathrm{ip}}}}}

\makeatletter
\newcommand{\cotqht}[2]{
  \def\buergi@arg{#1,#2}\@ifnextchar^\buergi@fexp{\mathsf{QHT}_\textnormal{cot}(#1, #2)}}
\def\buergi@fexp^#1{\mathsf{QHT}_\mathrm{cot}^{#1}(\buergi@arg)}
\makeatother
\newcommand{\lqiw}{\opw(\widehat\opp(s))}
\newcommand{\liw}{\widehat\opp(s)}
\newcommand{\sht}{\mathsf{HT}}
\newcommand{\eht}{\mathsf{HT}^{+}}
\newcommand{\intht}[1]{\mathsf{HT}_\mathrm{ip}(#1)}
\newcommand{\isolazy}{\widehat{\mathsf{T}}(s)}
\newcommand{\isops}{\mathsf{T}(s)}

\newcommand{\lazyphi}[2]{\ket{ {\widehat{\smash{\phi}\vphantom{i}}\vphantom{\phi}}_{#1}^{\hspace*{.5mm}\scriptstyle{#2}} }}

\newcommand{\nth}[1]{\ensuremath{{#1}^{\textup{th}}}}

\newcommand{\isoone}{\mathsf{R}_1}
\newcommand{\isotwo}{\mathsf{R}_2}
\newcommand{\isoe}{\mathsf{E}}

\newcommand{\basis}{e}
\newcommand{\nn}{N}

\renewcommand\bra[1]{{\langle{#1}|}}
\renewcommand\ket[1]{{|{#1}\rangle}}
\newcommand{\ketbra}[2]{\ket{#1}\!\bra{#2}}

\newcommand{\selfloop}{\circlearrowleft}

\DeclarePairedDelimiterX{\dnorm}[1]{\lVert}{\rVert}{#1}
\DeclarePairedDelimiterX{\dabs}[1]{\lvert}{\rvert}{#1}


\usepackage{tikz}
\usetikzlibrary{calc} 

\def\plusarrowlength{2.5pt}
\newcommand{\plusarrow}{\mathbin{\text{$
      \begin{tikzpicture}[
        baseline,
        ]
        \node[
        inner xsep=-1pt,
        inner ysep=-.5pt,
        outer sep=0pt,
        anchor=base
        ] (plus) {\phantom{+}};
        \begin{scope}[to-to]
          \draw ($(plus.west)+(-\plusarrowlength,0)$) -- ($(plus.east)+(\plusarrowlength,0)$);
          \draw ($(plus.north)+(0,\plusarrowlength)$) -- ($(plus.south)+(0,-\plusarrowlength)$);
        \end{scope}
      \end{tikzpicture}$}}}

\newcommand{%
  \scalebox{}{\input{}}  
}[2]{%
  \scalebox{#1}{\input{#2}}  
}


\title{Analysis of Lackadaisical Quantum Walks\footnote{
Published in:
Quantum Information and Computation, Vol.~20, No.~13--14 (2020) 1137--1152,
by Rinton Press.\newline
\url{http://www.rintonpress.com/xxqic20/qic-20-1314/1137-1152.pdf}}}

\author{Peter H{\o}yer}
\author{Zhan Yu}
\affil{Department of Computer Science, University of Calgary, Canada}

\date{}

\begin{document}
\maketitle

\begin{abstract} 
  The lackadaisical quantum walk is a quantum analogue of the lazy random walk
  obtained by adding a self-loop to each vertex in the graph.  We analytically
  prove that lackadaisical quantum walks can find a unique marked vertex on
  any regular locally arc-transitive graph with constant success probability
  quadratically faster than the hitting time.  This result proves several
  speculations and numerical findings in previous work, including the
  conjectures that the lackadaisical quantum walk finds a unique marked vertex
  with constant success probability on the torus, cycle, Johnson graphs, and
  other classes of vertex-transitive graphs.  Our proof establishes and uses a
  relationship between lackadaisical quantum walks and quantum interpolated
  walks for any locally arc-transitive graph.
\end{abstract}


\section{Introduction}
\label{sec_intro}
Searching is one of the most important tasks in computer science, and
searching algorithms have been well studied from both classical and quantum
aspects. One of the most famous quantum algorithms, Grover's search
algorithm~\cite{Gro96}, can search an $N$-item unstructured database in
$O(\sqrt{N})$ steps, which is quadratically faster than classical
searching. The method used in Grover's algorithm is generalized as amplitude
amplification in~\cite{BHMT02}.

Searching structured databases can be modeled as spatial search problems on
graphs, where the vertices of the graph represent the search space. A subset
of the vertices are marked, and the goal is to find one of the marked
vertices. One classical strategy is to use a random walk to traverse the graph
along its edges until a marked vertex is reached. The expected number of steps
$\sht$ required to reach a marked vertex by a random walk is called the
\emph{hitting time}. Quantum walks, which are quantum counterparts of random
walks, are used to develop quantum algorithms for spatial search problems.

Szegedy introduced a generic method of constructing a quantum walk from a
reversible random walk~\cite{Sze04}. The resulting quantum walk uses
$O(\sqrt{\sht})$ steps, which yields a quadratic speedup over the random
walk. Szegedy's algorithm does not necessarily find a marked vertex, but it
can detect the presence of a marked vertex. Krovi et al.~\cite{KMOR16} later
proposed a quantum algorithm based on the novel idea of interpolated
walks. They applied Szegedy's correspondence on the interpolated walks and
called the resulting algorithms quantum interpolated walks.  Quantum
interpolated walks can find a marked vertex in $O(\sqrt{\eht})$ steps for any
reversible random walk, where $\eht$ is the \emph{extended hitting time} of
the random walk. When there is a unique marked vertex, then $\eht = \sht$ and
this quantum walk thus achieves a quadratic speedup over the random walk. When
there are multiple marked vertices, $\eht$ may be asymptotically larger than
$\sht$~\cite{AK15}. Dohotaru and H{\o}yer~\cite{DH17b} achieved the same
result by introducing a different framework called controlled quantum walks.

Quantum walks are commonly applied on graphs without self-loops.  The
lackadaisical quantum walk proposed by Wong~\cite{Won15b}, is a quantum
analogue of the lazy random walk, which adds a self-loop of weight $\ell$ to
each vertex.  The lackadaisical quantum walk generalizes the three-state lazy
quantum walk on the line~\cite{Won15b,IKS05}.  The idea of adding self-loops
was first applied by Ambainis et al.~\cite{AKR05}, who showed that a quantum
walk on a complete graph with a self-loop on every vertex corresponds to
Grover's algorithm.  The lackadaisical quantum walk can also be viewed as a
coined quantum walk, and Wong uses this to demonstrate that the asymptotic
behavior of some coined quantum walks can be improved by modifying the
coin~\cite{Won18b}.

The lackadaisical quantum walk, with a unique marked vertex, has been studied
on several classes of graphs.  First, the complete graph was studied by
Wong~\cite{Won15b,Won18a,Won17a}, who proved analytically that the
lackadaisical quantum walk with $\ell = 1$ finds a unique marked vertex in
$O(\sqrt{N})$ steps with probability close to~1.

Second, the $\sqrt{N} \times \sqrt{N}$ torus was independently studied by
Wong~\cite{Won18b} and Wang et al.~\cite{WZWY17arxiv}, who both showed
numerically that the lackadaisical quantum walk finds a unique marked vertex
with probability close to~1.  The value of $\ell$ is $\frac{4}{N}$ in
Wong~\cite{Won18b} and $\frac{4}{N-1}$ in Wang et al.~\cite{WZWY17arxiv}.  The
result is strongly supported by the experiments, but with no analytical proofs
of the complexity and success probability, the result is stated as a
conjecture.

Third, the cycle was studied by Giri and Korepin~\cite{GK19}, who showed
numerically that by setting $\ell = \frac{2}{N}$, the lackadaisical quantum
walk finds a unique marked vertex with probability at least a constant.

Fourth, regular complete bipartite graphs were studied by Rhodes and
Wong~\cite{RW19b}, who proved analytically that the lackadaisical quantum walk
with $\ell = \frac{1}{2}$ finds a unique marked vertex in $O(\sqrt{N})$ steps
with probability close to~1.

Fifth, in a recent paper, Rhodes and Wong~\cite{RW20} study a collection of
graphs.  Their collection is a rich sample of vertex-transitive graphs,
comprised of the following instances and classes of graphs:
arbitrary-dimensional cubic lattices, Paley graphs, the two Latin square
graphs with strongly regular parameters $(9, 6, 3, 6)$ and
$(1024, 93, 32, 6)$, triangular graphs, Johnson graphs, and the hypercube.
They show numerically that by setting $\ell = \frac{d}{N}$, the lackadaisical
quantum walk finds a unique marked vertex with probability at least a
constant.  Here $d$ is the degree of the vertices.  They propose that this
holds for all vertex-transitive graphs with a unique marked vertex, and they
propose that the weight of self-loop $\ell = \frac{d}{N}$ optimally boosts the
success probability.

In this work, we prove analytically that the lackadaisical quantum walk finds
a unique marked vertex with probability at least a constant on all of the
above-mentioned graphs when choosing the weight $\ell$ of the self-loops as
listed above.  More generally, we prove that for any $d$-regular locally
arc-transitive graph, by adding a self-loop of weight $\ell = \frac{d}{N}$ on
each vertex, the lackadaisical quantum walk finds a unique marked vertex with
probability at least a constant.

Our main results are stated as Theorem~\ref{thm1} and Theorem~\ref{thm2} in
Section~\ref{sec_main}.  Theorem~\ref{thm1} states that the quantum hitting
time of lackadaisical quantum walks and quantum interpolated walks are of the
same order.  Theorem~\ref{thm2} states that the $\ell_2$-distance between the
two quantum states of the lackadaisical quantum walk and the quantum
interpolated walk, respectively, remains negligible for any number of steps
that is in the order of the quantum hitting time.  The two theorems hold for
any regular locally arc-transitive graph.

In Section~\ref{proof of thm1}, we prove Theorem~\ref{thm1} by introducing a
variant of lackadaisical quantum walks as an intermediate walk operator and
then giving an exact relationship between the quantum hitting times of all
three quantum walk operators.  In Section~\ref{proof of thm2}, we construct
isometries and use them to upper bound the $\ell_2$-distance between the
resulting states of the intermediate quantum walk operator and the quantum
interpolated walk after any number of steps.

By combining the two main Theorems with the analysis of quantum interpolated
walks given in~\cite{KMOR16} and the analysis of controlled quantum walks
given in~\cite{DH17b}, we complete the analytical proofs of the complexity and
success probability of lackadaisical walks on regular locally arc-transitive
graphs.  We also prove that self-loops of weight $\ell = \frac{d}{N}$
correspond to an interpolation value of $s=1-\frac{1}{N}$, which closely
matches the value of $s=1-\frac{1}{N-1}$ used in~\cite{KMOR16}.

The main technical contribution in our work is the use of locally
arc-transitivity to establish a connection between lackadaisical quantum walks
and quantum interpolated walks.  The definition of locally arc-transitivity is
given in Section~\ref{sec_main}.  We discuss the relationship between locally
arc-transitivity, vertex transitivity and other graph properties in
Section~\ref{sec_arc}, and we conclude in Section~\ref{sec_conclusion}.

Our results are for the case when there is a unique marked vertex.  When there
are multiple marked vertices, the lackadaisical quantum walk may fail in
finding a marked vertex.  Nahimovs~\cite{Nah19} proves that, on a
$\sqrt{N} \times \sqrt{N}$ torus with \emph{two} marked vertices placed
adjacently to each other, the lackadaisical quantum walk has a stationary
state that is close to the initial state, which implies that the walk finds a
marked vertex with probability no bigger than~$O(1/N)$.

\section{Two Quantum Walks}\label{sec_twowalks}
The graphs that we apply the quantum walks on, are regular undirected
connected graphs with a unique marked vertex.  A graph is said to be
\emph{$d$-regular} if every vertex has degree~$d$.  We will interchangeably
consider an undirected graph as a directed graph, where we consider each edge
$\{x,y\}$ between two distinct vertices as two arcs $(x,y)$ and $(y,x)$.  When
introducing self-loops below, we also interchangeably consider each edge
$\{x,x\}$ as a single arc $(x,x)$.  For each vertex~$x$ in turn, fix any
ordering of the $d$ neighbors $y_1, y_2, \ldots, y_d$ of~$x$.  We refer to
$y_i$ as the \nth{i} neighbor of~$x$, and the arc $(x,y_i)$ as the \nth{i}
outgoing arc of~$x$.  

Let $N$ denote the number of vertices, and let $\mathcal{H}_N$ be the Hilbert
space spanned by the vertices of the graph.  To each vertex we associate a
coin register in the Hilbert space $\mathcal{H}_d$ spanned by the basis
$\{\ket{\basis_1},\ket{\basis_2}, \ldots, \ket{\basis_d}\}$.  The coined
quantum walk takes place in the Hilbert space
$\mathcal{H}_N \otimes \mathcal{H}_d$, in which the state
$\ket{x}\ket{\basis_i}$ represents the arc from $x$ to its \nth{i}
neighbor~$y_i$.

\vspace*{12pt}
\begin{definition}[Lackadaisical quantum walks~\cite{Won15b}]\label{lqw}
  Given a $d$-regular graph with a unique marked vertex $m$, by adding a
  self-loop of weight $\ell$ to every vertex, the coined Hilbert space
  $\mathcal{H}_{d+1}$ is spanned by
  $\bigl\{\ket{\basis_1},\ket{\basis_2}, \ldots, \ket{\basis_d},
  \ket{\circlearrowleft}\bigr\}$. The lackadaisical quantum walk is defined as
  \begin{equation}
    \lqw = \opw \cdot \opg.
  \end{equation}
  Here $\opw$ is the quantum walk operator (without searching) defined as
  $\opw = \opshift\cdot(\opid_N \otimes \opc)$,
  where operator
  $\opc = 2 \ketbra{c}{c} - \opid_{d+1}$
  with
  \begin{equation}\label{def_c}
    \ket{c} = \frac{1}{\sqrt{d+\ell}} \bigl(\ket{\basis_1} + \ket{\basis_2} + \cdots + \ket{\basis_d} + \sqrt{\ell} \ket{\circlearrowleft}\bigr)
  \end{equation}
  is the diffusion coin for a weighted graph and $\opshift$ is the flip-flop
  shift operator~\cite{AKR05} defined as
  \begin{equation*}
    \opshift\colon
    \left\{
      \begin{aligned}
        &\ket{x, \basis_i} &&\mapsto&& \ket{y, \basis_j} \\
        &\ket{x, \circlearrowleft} &&\mapsto&& \ket{x, \circlearrowleft},
      \end{aligned}
    \right.
  \end{equation*}
  where $y$ is the \nth{i} neighbor of~$x$, and $x$ is the \nth{j} neighbor
  of~$y$.  A~query to the oracle is defined as
  \begin{equation}\label{def_G}
    \opg = (\opid_N - 2\ketbra{m}{m}) \otimes \opid_{d+1},
  \end{equation}
  where $\ket{m}$ denotes the unique marked vertex. 
\end{definition}

The lackadaisical quantum
walk $\lqw$ begins in the state
\begin{equation*}
  \initlazy = \frac{1}{\sqrt{N-1}}\sum_{x\neq m}\ket{x} \otimes \ket{c},
\end{equation*}
which is a uniform superposition over all unmarked vertices.

Given a graph $G$, define the \emph{random walk} $\opp = \opp(G)$, where
$\opp_{xy}$ is the transition probability from vertex $x$ to vertex $y$.  If
vertex $y$ is a neighbor of vertex $x$, then $\opp_{xy} = \frac{1}{\deg(x)}$,
where $\deg(x)$ denotes the degree of~$x$, and otherwise $\opp_{xy} = 0$. The
stationary distribution of $\opp$ is denoted by $\uppi$, and $\uppi_v$ denotes
the probability of being in vertex~$v$ in the stationary distribution.  The
\emph{absorbing walk} $\opp'$ is obtained from $\opp$ by replacing all
outgoing transitions from any marked vertex with self-loops, that is,
$\opp'_{xy} = \opp_{xy}$ for all unmarked vertices $x$ and all $y$, and, for
any marked vertex~$m$, $\opp'_{mm} = 1$ and $\opp'_{my} = 0$ for $y \neq m$.

Given $0 \leq s \leq 1$, the \emph{interpolated walk} $\iw$ is defined as
\begin{equation*}
  \iw = (1-s)\opp + s\opp'.
\end{equation*}
Note that $\iw_{xy}$ is the transition probability from vertex $x$ to vertex
$y$ in $\iw$.

\begin{definition}[Quantum interpolated walks~\cite{KMOR16}]
  Applying Szegedy's correspondence~\cite{Sze04} on the interpolated walk
  $\iw$, we construct the quantum interpolated walk
  \begin{equation*}
    \qiw = \opswap \cdot \opref(\mathcal{A}).
  \end{equation*}
  The operator $\opref(\mathcal{A})$ is a reflection about the
  subspace $\mathcal{A}$ spanned by $\{\ket{x, \iw_x}\}$ for all vertices
  $x$, where
  \begin{equation*}
    \ket{\iw_x} = \sum_{y} \sqrt{\iw_{xy}} \ket{y}
  \end{equation*}
  is a superposition over the neighbors of~$x$. The operator $\opswap$ swaps
  the two registers. 
\end{definition}

The initial state for the quantum interpolated walk $\qiw$ is
\begin{equation*}
  \initip = \frac{1}{\sqrt{N-1}} \sum_{x\neq m} \ket{x}\otimes \ket{\iw_{x}}.
\end{equation*}

\begin{definition}[The cotangent quantum hitting time \cite{DH17b}]
  \label{cotangent qht}
  The \emph{cotangent quantum hitting time} of a quantum walk $\mathsf{U}$ on
  a state $\ket{w}$ is
  \begin{equation*}
    \cotqht{\mathsf{U}}{\ket{w}} = \sqrt{\sum_{\phi_k^\pm\neq 1} \dabs[\big]{
        \braket{\phi_k^\pm | w} }^2 \cot^2{\frac{\theta_k}{2}}}
  \end{equation*}
  where $\ket{\phi_k^\pm}$ are the eigenvectors of $\mathsf{U}$ corresponding
  to the eigenvalues $\phi_k^\pm = e^{\pm i\theta_k}$.
\end{definition}

\section{Main Theorems}\label{sec_main}

Rhodes and Wong~\cite{RW20} and the other earlier work consider lackadaisical
quantum walks on certain instances and classes of regular vertex-transitive
graphs.  One common property of these instances and classes of graphs is that
they are locally arc-transitive.  A~graph $G$ is said to be \emph{locally
  arc-transitive} if for any vertex $u$ with neighbors $v_1$ and $v_2$, there
exists an automorphism $\sigma$ of~$G$ that maps the arc $(u,v_1)$ to the arc
$(u,v_2)$.  That is, there is an automorphism that fixes $u$ while mapping any
one of $u$'s neighbors to any other of $u$'s neighbors.

Any connected \emph{locally arc-transitive} graph must be biregular, since for
any two vertices $u$ and~$v$ connected by a path of even length, there exists
an automorphism that maps $u$ to~$v$.  By the same argument, if the graph
contains an odd cycle, then it must be regular.  For simplicity, in this
paper, we consider only locally arc-transitive graphs that are regular.

We prove that the lackadaisical quantum walk $\lqw$ searches regular locally
arc-transitive graphs for a unique marked vertex in $O(\sqrt{\sht})$ steps
with constant success probability.

\begin{theorem}\label{thm0}
  Let $G$ be a $d$-regular locally arc-transitive graph with $N$ vertices and
  a unique marked vertex~$m$.  The lackadaisical quantum walk with selfloop of
  weight $\ell=\frac{d}{N}$ can find $m$ with constant success probability in
  $O\bigl(\sqrt{\sht(\opp(G), \{m\})}\bigr)$ steps.
\end{theorem}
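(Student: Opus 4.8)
The statement of Theorem~\ref{thm0} is essentially a corollary of the two announced main theorems (Theorem~\ref{thm1} and Theorem~\ref{thm2}) combined with the known analysis of quantum interpolated walks~\cite{KMOR16} and controlled quantum walks~\cite{DH17b}. The plan is therefore to assemble these pieces rather than to do the hard analytic work from scratch. First I would fix the interpolation parameter to the value $s = 1 - \frac{1}{N}$ that the introduction announces as corresponding to self-loop weight $\ell = \frac{d}{N}$; this is the bridge identification that lets us transport everything about $\qiw$ back to $\lqw$.

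**Key steps, in order.** (1) Invoke Theorem~\ref{thm1} to conclude that the cotangent quantum hitting time of the lackadaisical walk $\lqw$ on $\initlazy$ is of the same order as that of the quantum interpolated walk $\qiw$ on $\initip$; by the standard analysis of~\cite{KMOR16} and~\cite{DH17b}, the latter is $\Theta\bigl(\sqrt{\eht}\bigr)$, and since there is a unique marked vertex, $\eht = \sht = \sht(\opp(G), \{m\})$. Hence the relevant running time is $T = \Theta\bigl(\sqrt{\sht}\bigr)$. (2) Invoke Theorem~\ref{thm2}: after any number of steps $t = O(T)$, the $\ell_2$-distance between the state of $\lqw$ started at $\initlazy$ and the state of $\qiw$ started at $\initip$ is negligible (say $o(1)$). (3) From the known analysis of quantum interpolated/controlled walks, there is a step count $t^\ast = O(\sqrt{\sht})$ at which measuring the vertex register of the $\qiw$-state yields the marked vertex~$m$ with probability at least a constant $c_0 > 0$. (4) Running $\lqw$ for the same $t^\ast$ steps and measuring the vertex register: by the triangle inequality and step (2), the total variation distance between the two measurement outcome distributions is at most the $\ell_2$-distance, which is $o(1)$; so $\lqw$ outputs $m$ with probability at least $c_0 - o(1) \geq c_0/2$ for $N$ large enough. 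Finitely many small~$N$ are handled trivially. This gives constant success probability in $O\bigl(\sqrt{\sht(\opp(G),\{m\})}\bigr)$ queries, which is the claim.

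**Where the real difficulty lies.** The genuine obstacles are hidden inside Theorem~\ref{thm1} and Theorem~\ref{thm2}, which this corollary takes as given. Theorem~\ref{thm1} requires relating the spectral decompositions of $\lqw$ and $\qiw$ finely enough to match cotangent quantum hitting times up to a constant — the local arc-transitivity hypothesis is exactly what forces the self-loop coin register to behave, on the relevant invariant subspace, like the extra ``$s$-weighted'' transition of the interpolated walk, and turning this symmetry observation into an exact relationship among the hitting times of $\lqw$, the intermediate variant $\vlqw$, and $\qiw$ is the technical heart. Theorem~\ref{thm2} then needs the isometry construction sketched in the introduction to control the state distance uniformly over all $t = O(\sqrt{\sht})$; a naive step-by-step error accumulation would only give a bound growing with $t$, so the argument must exploit that the two walks track each other on a common low-dimensional invariant subspace where the discrepancy does not compound. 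For the corollary itself, the only subtlety worth flagging is step (4): one must be careful that it is the $\ell_2$-distance of the \emph{full} states (not merely of reduced states) that Theorem~\ref{thm2} bounds, so that the contraction under partial measurement is immediate and no additional decoherence estimate is needed.
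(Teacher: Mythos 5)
Your proposal is correct and follows essentially the same route as the paper, which likewise derives Theorem~\ref{thm0} by combining Theorem~\ref{thm1} (to transfer the $O(\sqrt{\sht})$ bound on the cotangent quantum hitting time from $\qiw$ to $\lqw$) with Theorem~\ref{thm2} and the known success-probability analysis of quantum interpolated walks in~\cite{KMOR16,DH17b}. The measurement step you flag in (4) is exactly the standard trace-distance argument the paper leaves implicit, and it goes through as you describe.
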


Here $\sht(\opp(G), \{m\})$ denotes the hitting time on~$G$ when the unique
marked vertex is~$m$.  Theorem~\ref{thm0} follows from Theorems~\ref{thm1}
and~\ref{thm2}.  Theorem~\ref{thm1} shows that the quantum hitting time of
lackadaisical quantum walks is of the same order as the quantum hitting time
of quantum interpolated walks. Theorem~\ref{thm2} then shows that the
$\ell_2$-distance between the resulting states of $\lqw$ and $\qiw$ is small
for any number of steps in $O(\sqrt{\sht})$.

\begin{theorem}\label{thm1}
  Consider a $d$-regular locally arc-transitive graph with $N$ vertices and a
  unique marked vertex~$m$. For the lackadaisical quantum walk $\lqw$, add
  self-loops of weight $\ell=\frac{d}{N}$ on every vertex. For the quantum
  interpolated walk $\qiw$, choose $s = 1- \frac{\ell}{d}$. Then
  \begin{equation*}
    \cotqht{\lqw}{\initlazy}^2 = \frac{N+1}{N}\cotqht{\qiw}{\initip}^2 + \frac{1}{2N-1},
  \end{equation*}
  and
  \begin{equation*}
    \cotqht{\qiw}{\initip} \in O\Bigl(\sqrt{\sht(\opp, \{m\})}\Bigr).
  \end{equation*}
\end{theorem}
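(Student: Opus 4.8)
The plan is to prove the identity relating the two cotangent quantum hitting times by exhibiting, for the locally arc-transitive graph in question, a precise structural correspondence between the invariant subspaces of $\lqw$ and $\qiw$ on which the relevant eigenvectors live. First I would restrict attention to the smallest invariant subspace of each walk operator that contains the respective initial state $\initlazy$ and $\initip$; for Szegedy-type walks this subspace is well understood and is spanned by a family of ``symmetrized'' states indexed by vertices (or by the orbit structure of the automorphism group). The crucial point is that local arc-transitivity lets us collapse the coin register: because there is an automorphism fixing any vertex $u$ and permuting its neighbors transitively, the lackadaisical walk started from $\initlazy$ only ever explores states that are symmetric over the $d$ non-loop coin directions. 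This means the effective coin space is two-dimensional at each vertex (the symmetric combination of the $d$ arcs, plus the self-loop direction), which is exactly the dimension of the coin space $\ket{\iw_x}$ lives in for the interpolated walk once one sets $s=1-\frac{\ell}{d}$.

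Next I would make the substitution $\ell=\frac{d}{N}$, hence $s=1-\frac{1}{N}$, and write down the explicit action of $\lqw$ and $\qiw$ on these reduced subspaces. With $\ell=\frac{d}{N}$ the diffusion coin $\ket{c}$ has squared overlap $\frac{d}{d+\ell}=\frac{N}{N+1}$ with the uniform-over-arcs direction and $\frac{\ell}{d+\ell}=\frac{1}{N+1}$ with the self-loop; these are the weights that will produce the $\frac{N+1}{N}$ factor and the additive $\frac{1}{2N-1}$ term. I would diagonalize both reduced operators, matching eigenphases $\theta_k$ one-to-one (this matching is forced once the reduced matrices are shown to be similar up to the known rescaling), and then compare the spectral expansions in Definition~\ref{cotangent qht}. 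The sum over $\phi_k^\pm\neq 1$ for $\lqw$ splits into the part coming from eigenvectors that also appear in $\qiw$'s expansion — contributing $\frac{N+1}{N}\cotqht{\qiw}{\initip}^2$ after accounting for the overlap rescaling — plus a single extra term coming from the self-loop/stationary direction, which a direct computation pins down as $\frac{1}{2N-1}$. Keeping careful track of which eigenvalue-$1$ components are discarded on each side is where the bookkeeping must be exact.

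For the second claim, $\cotqht{\qiw}{\initip}\in O(\sqrt{\sht(\opp,\{m\})})$, I would simply invoke the analysis of quantum interpolated walks: Krovi et al.~\cite{KMOR16} bound the quantum hitting time of $\qiw$ at $s=1-\frac{1}{N-1}$ by $O(\sqrt{\eht})$, and with a unique marked vertex $\eht=\sht$. Since our interpolation parameter is $s=1-\frac{1}{N}$ rather than $1-\frac{1}{N-1}$, I would either re-run their estimate verbatim with this nearby value — the extended hitting time and the relevant spectral gaps vary continuously and negligibly in $s$ near $1$ — or cite the controlled-quantum-walk bound of Dohotaru and H{\o}yer~\cite{DH17b}, which gives the same order. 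The cotangent formulation of the quantum hitting time is precisely the quantity these references control, so once the subspace reduction above is in place this step is essentially a citation.

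The main obstacle I expect is the first step: justifying rigorously that local arc-transitivity suffices to reduce the $(d+1)$-dimensional coin space to an effective $2$-dimensional one \emph{uniformly at every vertex simultaneously}, including at and near the marked vertex~$m$ (where the shift operator and the oracle $\opg$ break some symmetry), and then showing the resulting reduced operator for $\lqw$ is genuinely similar to the reduced $\qiw$ operator with the stated scalar corrections. Verifying the eigenphase matching and computing the leftover $\frac{1}{2N-1}$ term cleanly — rather than as an opaque residue — is the delicate part; everything downstream is linear algebra and an appeal to \cite{KMOR16,DH17b}.
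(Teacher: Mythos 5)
Your overall plan --- use the graph symmetry to tame the coin register, identify the lackadaisical walk with a Szegedy walk of a lazy chain, and then relate the two spectral sums --- is the right shape, but the central reduction is wrong as you state it, and it is precisely the point you flag as your ``main obstacle.'' It is \emph{not} true that the walk started from $\initlazy$ stays symmetric over the $d$ non-loop coin directions at every vertex. On a cycle, for instance, at an unmarked vertex $u$ the arc pointing toward $m$ and the arc pointing away from $m$ carry different amplitudes after a few steps: the only automorphism fixing $u$ and swapping its neighbours is the reflection through $u$, which moves $m$ and hence fails to commute with $\opg$. Local arc-transitivity gives you coin symmetry \emph{only at the marked vertex}, via the automorphisms that fix $m$ and permute its neighbours; the paper proves exactly this invariant (Lemma~\ref{lem_invar}) and uses it only to replace the full $(d{+}1)$-dimensional reflection $\opg$ at $m$ by the rank-two reflection about $\ket{m,\plusarrow}$ and $\ket{m,\selfloop}$. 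No symmetry at unmarked vertices is needed, because the correspondence there is an exact conjugation $\vlqw = \isoe\cdot\lqiw\cdot\isoe^\dagger$ on the \emph{full} coin space, with $\isoe\ket{x,y_i}=\ket{x,e_i}$ and $\lqiw$ the Szegedy walk of the lazy interpolated chain $\liw$; the invariant Szegedy subspace exists for any graph and owes nothing to arc-transitivity. So your reduction to a two-dimensional coin ``uniformly at every vertex'' would fail, while the weaker statement you actually need (symmetry at $m$ alone) is both true and sufficient.

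Your account of the spectral bookkeeping also misidentifies where $\frac{1}{2N-1}$ comes from. The paper does not diagonalize two reduced unitaries and match eigenphases; it observes that $\liw = \frac{N}{N+1}\iw + \frac{1}{N+1}\opid_N$, so the two discriminants have \emph{identical} eigenvectors with affinely related eigenvalues, whence $\intht{\liw}=\frac{N+1}{N}\intht{\iw}$, and then applies the identity $\cotqht{\qiw}{\initip}^2 = 2\,\intht{\iw} - \frac{p_M}{1-s(1-p_M)}$ to both walks. There is no ``single extra term from the self-loop/stationary direction'' --- the two spectral sums have the same index set --- and the $\frac{1}{2N-1}$ arises as the difference of the two correction terms $\frac{p_M}{1-s(1-p_M)}$ with $p_M=\frac1N$. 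Your identification of the weights $\frac{d}{d+\ell}=\frac{N}{N+1}$ and $\frac{\ell}{d+\ell}=\frac{1}{N+1}$ as the source of the $\frac{N+1}{N}$ factor is correct, and your citation of \cite{KMOR16} and \cite{DH17b} for the $O\bigl(\sqrt{\sht(\opp,\{m\})}\bigr)$ bound matches the paper. But as written the proposal has a genuine gap: the claimed invariant subspace is not invariant, and the mechanism producing the additive constant is not the one you describe.
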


\begin{theorem}\label{thm2}
  Set $T_0 =\big\lfloor c \cdot \sqrt{\sht(\opp, \{m\})} \big\rfloor$ for any
  fixed constant $c \geq 1$. For all $t \leq T_0$,
  \begin{equation*}
    \dnorm[\big]{\lqw^t \initlazy - \qiw^t \initip}_2
    \in O\biggl(\frac{1}{N^{1/4}}\biggr).
  \end{equation*}
\end{theorem}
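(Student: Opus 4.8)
The plan is to compare $\lqw$ and $\qiw$ indirectly, through the intermediate walk $\vlqw$ of Section~\ref{proof of thm1}, and to quantify the comparison with isometries. From the proof of Theorem~\ref{thm1} we already have that, with $\ell=\frac dN$, the dynamics of $\lqw$ started in $\initlazy$ coincides with that of $\vlqw$ started in $\initlip$ up to a fixed isometry; under the identification that makes the $\ell_2$-distance in the statement meaningful, this yields
\[
\dnorm{\lqw^t\initlazy-\qiw^t\initip}_2=\dnorm{\vlqw^t\initlip-\qiw^t\initip}_2,
\]
so it suffices to bound the right-hand side for all $t\le T_0$.

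Next I would build the isometries $\isoone,\isotwo$, with composite $\isoe$, carrying the $\vlqw$-invariant subspace that contains $\initlip$ into the $\qiw$-invariant subspace that contains $\initip$. Here local arc-transitivity is used to confine the coin at each vertex $x$ to the two-dimensional span of the flow direction $\frac1{\sqrt d}\sum_{y\sim x}\ket y$ and the self-loop direction, so that on the relevant subspace both walks carry the same combinatorial skeleton and $\isoe$ is a vertexwise change of basis identifying the flow parts. Two estimates have to be established: (i)~$\dnorm{\isoe\initlip-\initip}_2=O(1/\sqrt N)$, reflecting the $O(1/\sqrt N)$ total weight that $\initlazy$ places on self-loops together with an $O(1/N)$ coin-renormalization correction at the unmarked vertices; and (ii)~the defect $\Delta:=\qiw\isoe-\isoe\vlqw$ is vertexwise block-diagonal with each block of operator norm $O(1/\sqrt N)$, coupling the flow direction to the self-loop direction at the unmarked vertices and coupling $\ket m\otimes\ket m$ with $\ket m\otimes\frac1{\sqrt d}\sum_{y\sim m}\ket y$ at the marked vertex.

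Then I would estimate
\[
\dnorm{\qiw^t\initip-\isoe\vlqw^t\initlip}_2\ \le\ \dnorm{\initip-\isoe\initlip}_2+\dnorm[\big]{\bigl(\qiw^t-\isoe\vlqw^t\isoe^\dagger\bigr)\isoe\initlip}_2 ,
\]
bounding the second term through the Jordan (two-dimensional rotation-block) decompositions of $\vlqw$ and $\qiw$, matched by $\isoe$. A naive telescope gives only $\sum_{j<t}\dnorm{\Delta\vlqw^j\initlip}_2=O(t/\sqrt N)$, which is useless once $t\sim\sqrt{\sht}$ (and $\sqrt{\sht}$ can be of order $N$, e.g. on the cycle). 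The gain comes from Theorem~\ref{thm1}: it forces $\cot^2(\hat\theta_k/2)=\frac{N+1}{N}\cot^2(\theta_k/2)$ between the matched rotation angles, the one extra self-loop mode producing the $\frac1{2N-1}$ term, whence $|\hat\theta_k-\theta_k|=O(\theta_k/N)$. The blockwise phase error is therefore $\sum_k p_k\,|e^{it\theta_k}-e^{it\hat\theta_k}|^2=O\!\bigl(\frac{t^2}{N^2}\sum_k p_k\theta_k^2\bigr)$, where $p_k$ is the weight $\initlip$ places on the $k$-th block; since the hitting time of a regular graph is $O(N^2)$ we have $t^2/N^2=O(1)$ for $t\le c\sqrt{\sht}$, and $\sum_k p_k\theta_k^2$ is controlled by the signal angle $\theta_\star=\Theta(1/\sqrt{\sht})$ plus a small residual. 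The self-loop mode is an almost-eigenvector of $\vlqw$ (eigenvalue near $-1$ at the unmarked vertices and near $+1$ at $m$), so its contribution to the distance stays $O(1/\sqrt N)$ uniformly in $t$. Assembling the blockwise phase error, this uniform self-loop contribution, and the $O(1/\sqrt N)$ initial-state mismatch, a careful accounting gives $\dnorm{\qiw^t\initip-\isoe\vlqw^t\initlip}_2=O(N^{-1/4})$, which together with the first step proves the theorem.

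The main obstacle is exactly this last step: preventing a per-step discrepancy of size only $O(1/\sqrt N)$ from accumulating to order~$1$ over the $\Theta(\sqrt{\sht})$ steps in play. Overcoming it forces one to abandon the naive telescope, work blockwise in the Jordan decomposition, and exploit the \emph{exact} spectral correspondence of Theorem~\ref{thm1} — which converts the accumulated error into a phase mismatch between nearly equal angles — and this blockwise bookkeeping is precisely where the local-arc-transitivity reduction of the coin to two dimensions at every vertex becomes indispensable.
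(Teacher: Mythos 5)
Your overall architecture matches the paper's: reduce $\lqw$ to the Szegedy walk $\lqiw$ of the lazy interpolated walk via the isometry $\isoe$ (Lemmas~\ref{lem1} and~\ref{lem2}), then compare $\lqiw^t\initlip$ with $\qiw^t\initip$ blockwise in the matched spectral decompositions, exploiting the exact relation $\widehat\lambda_k=\frac{N}{N+1}\lambda_k+\frac{1}{N+1}$ that comes from $\liw$ being a convex combination of $\iw$ and the identity. Your $O(1/\sqrt N)$ initial-state/isometry mismatch corresponds to the paper's Lemma~\ref{lem4}, and your refusal of the naive telescoping in favour of a phase-mismatch analysis is exactly the right instinct.

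However, the quantitative core of your argument has a genuine gap. From $\widehat\lambda_k=\frac{N}{N+1}\lambda_k+\frac{1}{N+1}$ one gets $\cos\widehat\theta_k-\cos\theta_k=\frac{1-\cos\theta_k}{N+1}$, and the bound $\lvert\widehat\theta_k-\theta_k\rvert=O(\theta_k/N)$ that you rely on holds only for angles bounded away from $\pi$ (say $\theta_k\le\pi/2$, which is where the paper uses it in Fact~\ref{factsmall}). For $\lambda_k$ near $-1$ (e.g.\ the eigenvalue $-1$ of $\opp$ on a bipartite graph) one instead has $\theta_k-\widehat\theta_k=\Theta(1/\sqrt N)$, so the per-mode phase error $\lvert e^{it\theta_k}-e^{it\widehat\theta_k}\rvert$ is $\Theta(1)$ once $t\sim\sqrt{\sht}\sim N$, and your estimate $\sum_k p_k\lvert e^{it\theta_k}-e^{it\widehat\theta_k}\rvert^2=O\bigl(\frac{t^2}{N^2}\sum_kp_k\theta_k^2\bigr)$ fails for those modes. (Your per-mode identity $\cot^2(\widehat\theta_k/2)=\frac{N+1}{N}\cot^2(\theta_k/2)$ is also not correct: the true relation is $\cot^2(\widehat\theta_k/2)=\cot^2(\theta_k/2)+\frac{2}{N(1-\lambda_k)}$, and Theorem~\ref{thm1} is only an aggregate statement about the weighted sums.) The ``residual'' you assign to the self-loop mode does not cover these modes: they come from eigenvalues of the discriminant $\dm{\iw}$ near $-1$, not from the extra self-loop coin dimension. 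What closes the argument --- and what the paper's Fact~\ref{factlarge} supplies --- is a bound on the total spectral weight at large angles: since $\bra{\overline{init}_\mathrm{ip}}\dm{\iw}\initip=1-\frac1{N-1}$, one gets $\sum_{\theta_k>\theta_0}\alpha_k^2\le\frac{1}{(1-\cos\theta_0)(N-1)}$, and choosing the threshold $\theta_0$ to balance this against the small-angle phase error of order $\frac{t}{N}\sin(\theta_0/2)$ yields the stated $O(N^{-1/4})$. Without this split and weight bound your accounting does not close.
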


Our main results show that the lackadaisical quantum walk $\lqw$ is closely
related to the quantum interpolated walk $\qiw$. This relationship permits us
to analyze the quantum hitting time and behavior of the lackadaisical quantum
walk $\lqw$ using known results about quantum interpolated walks. It is shown
in~\cite{KMOR16} that $\qiw$ finds a unique marked element in
$O(\cotqht{\qiw}{\initip})$ steps with constant success probability, where we
use the tight bounds on the cotangent quantum hitting time given in Appendix~A
in~\cite{DH17b}.  This proves the conjectures and numerical findings
in~\cite{RW20} and the other earlier work on the complexity and success
probability of lackadaisical quantum walks on those graphs.  We prove
Theorem~\ref{thm1} in Sec.~\ref{proof of thm1} and Theorem~\ref{thm2} in
Sec.~\ref{proof of thm2}.

Throughout the remaining sections, we fix $\ell = \frac{d}{N}$ for the
lackadaisical quantum walk and $s = 1- \frac{\ell}{d}$ for the quantum
interpolated walk.

\section{Technical Preliminaries}\label{sec_tech}

Define the lazy random walk on $G$ as
\begin{equation}\label{lazy walk}
  \oplw = \frac{d}{d+\ell}\cdot \opp + \frac{\ell}{d+\ell} \cdot \opid_{N},
\end{equation}
obtained by adding a self-loop of weight $\ell$ to every vertex.  The
interpolation of a lazy random walk is then denoted
\begin{equation}\label{interpolated lazy walk}
  \liw = (1-s) \cdot \oplw + s \cdot \oplw',
\end{equation}
where $\oplw' = (\oplw)'$ is the absorbing walk derived from the lazy random
walk~$\oplw$.

We apply Szegedy's correspondence on $\iw$ and $\liw$. For convenience, we
only show the details on constructing $\qiw$. Applying Szegedy's
correspondence on $\liw$ is similar, except we use `$\ \widehat{\ }\ $' when
referring to $\liw$. The discriminant~\cite{Sze04} of the interpolated walk
$\iw$ is
\begin{equation*}
  \dm{\iw} = \sqrt{\iw \circ \iw^T },
\end{equation*}
where the Hadamard product ``$\circ$'' and the square root are taken
entry-wise, and the $T$ denotes matrix transposition.  We denote the
corresponding eigenvalues of $\dm{\iw}$ by $\lambda_k$, where
$k=1,\ldots,\nn$.  Let $-\pi/2 \leq \theta_k \leq \pi/2$ be angles so that
$\lambda_k = \cos\theta_k$.

The \emph{interpolated hitting time} \cite{KMOR16} of an interpolated walk $\iw$ is
\begin{equation}\label{ip hitting time}
  \intht{\iw} = \sum_{\lambda_k \neq 1}
  \frac{\lvert\braket{\lambda_k | \sqrt{\bar{\uppi}}}\rvert^2}{1-\lambda_k},
\end{equation}
where $\ket{\lambda_k}$ are the corresponding eigenvectors and $\bar{\uppi}$
is the uniform distribution over all unmarked vertices.

To analyze the quantum analogue $\qiw$ of the interpolated walk $\iw$, define
the isometry
\begin{equation*}
  \isops = \sum_x \ketbra{x, \iw_x}{x}.
\end{equation*}
The quantum walk $\qiw$ has a unique eigenvector
$\ket{\phi_{\nn}} = \isops\ket{\lambda_{\nn}}$ with eigenvalue $\phi_{\nn}= 1$. The
remaining $2(\nn-1)$ eigenvalues and eigenvectors are
\begin{align*}
  \phi^\pm_k &= e^{\pm i \theta_k},
  & \ket{\phi^\pm_k} &= \frac{\isops\ket{\lambda_k} \pm i(\isops\ket{\lambda_k})^\perp}{\sqrt{2}}
\end{align*}
for $k = 1, \ldots, \nn-1$.  The phases of the eigenvectors can be chosen so
that they satisfy that
\begin{equation*}
  \isops \ket{\lambda_k} = \frac{1}{\sqrt{2}}(\ket{\phi_k^+} + \ket{\phi_k^-}).
\end{equation*}
We decompose $\sqrt{\bar{\uppi}}$ into the basis of $\dm{\iw}$ for scalars
$\alpha_k$,
\begin{equation}\label{decompose pibar}
  \sqrt{\bar{\uppi}} = \sum_{k=1}^{\nn} \alpha_k \ket{\lambda_k},
\end{equation}
and write the initial state as
\begin{equation*}
  \initip = \isops \sqrt{\bar{\uppi}} = \alpha_{\nn} \ket{\phi_{\nn}} + \frac{1}{\sqrt{2}} \sum_{k=1}^{\nn-1} \alpha_k (\ket{\phi_k^+} + \ket{\phi_k^-}).
\end{equation*}
Applying the quantum walk $\qiw$ for $t$ times on $\initip$, yields the state
\begin{equation}\label{decompose state}
  \qiw^t \initip = \alpha_{\nn} \ket{\phi_{\nn}} + \frac{1}{\sqrt{2}} \sum_{k=1}^{\nn-1} \alpha_k \bigl((e^{i\theta_k})^t\ket{\phi_k^+} + (e^{-i\theta_k})^t\ket{\phi_k^-}\bigr).
\end{equation}

\section{Proof of Theorem~\ref{thm1}}\label{proof of thm1}
To prove Theorem~\ref{thm1}, we use a variant of lackadaisical quantum walk as
an intermediate quantum walk operator. The lackadaisical quantum walk $\lqw$
in Definition~\ref{lqw} uses a query to the oracle~$\opg$. We define a query
to a different oracle as
\begin{equation}\label{def_Ghat}
  \widehat\opg = \opid_{(d+1)N} - 2 \big(
  \ketbra{m, \circlearrowleft}{m, \circlearrowleft}
  + \ketbra{m, \plusarrow}{m, \plusarrow}\big),
\end{equation}
where
\begin{equation*}
  \ket{\plusarrow} = \frac{1}{\sqrt{d}} \sum_{i=1}^{d} \ket{\basis_i}
\end{equation*}
denotes an equally weighted superposition over all the $d$ outgoing arcs of
any vertex.  Using the query $\widehat\opg$, we define the following variant
of lackadaisical quantum walks,
\begin{equation*}
  \vlqw = \opw \cdot \widehat\opg.
\end{equation*}
We first show that for any locally arc-transitive graph, we can replace the
query $\opg$ by the modified query~$\widehat\opg$ without altering the
evolution of the walk.

Let $G$ be a $d$-regular locally arc-transitive graph with a unique marked
vertex~$m$.  Write the state of the system after $t$ steps of the walk~$\lqw$
on the initial state $\initlazy$,
\begin{equation}
  \label{eq_stateafterdsteps}
  \lqw^t \initlazy 
  = \bigg(\sum_{u \neq m; \,i} \alpha_{u,i} \ket{u,\basis_i} \bigg)
  + \bigg(\sum_{i} \alpha_{m,i} \ket{m,\basis_i} \bigg)
  + \bigg(\sum_{v \in V(G)} \alpha_{v,\circlearrowleft} \ket{v,\circlearrowleft} \bigg),
\end{equation}
for some amplitudes~$\alpha$.  We first show that locally arc-transitivity
implies that the amplitudes of the outgoing arcs of the marked state $m$
remain equal after any number of iterations.

\begin{lemma}\label{lem_invar}
  For all $t \geq 0$, $\alpha_{m,i} = \alpha_{m,j}$ for all outgoing arcs
  $(m,y_i)$ and $(m,y_j)$ of the marked vertex~$m$.
\end{lemma}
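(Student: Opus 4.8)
The plan is to prove the invariance by induction on $t$, using locally arc-transitivity to produce a unitary symmetry of the walk operator $\lqw$ that permutes the outgoing arcs of $m$ while fixing everything relevant to the initial state. For $t=0$, the claim holds since $\initlazy$ has the marked vertex amplitude zero (it is supported only on unmarked vertices), so in particular $\alpha_{m,i} = 0 = \alpha_{m,j}$ trivially; alternatively, even if one wrote the initial state differently, the coin state $\ket{c}$ assigns equal weight $\frac{1}{\sqrt{d+\ell}}$ to each $\ket{\basis_i}$. For the inductive step, suppose the state after $t$ steps has $\alpha_{m,i} = \alpha_{m,j}$ for all $i,j$; I want to conclude the same after one more application of $\lqw = \opw \cdot \opg$.

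\textbf{Constructing the symmetry.} Fix two outgoing arcs $(m,y_i)$ and $(m,y_j)$ of $m$. By local arc-transitivity there is a graph automorphism $\sigma$ fixing $m$ and sending $y_i \mapsto y_j$. Because $\sigma$ is a bijection on vertices preserving adjacency, it induces a permutation on the arc set, hence a permutation unitary $U_\sigma$ on $\mathcal{H}_N \otimes \mathcal{H}_{d+1}$ defined by its action on arc-basis states $\ket{x,\basis_k} \mapsto \ket{\sigma(x), \basis_{k'}}$ (where the arc $(x,y_k)$ maps to $(\sigma x, \sigma y_k)$, whose index at $\sigma x$ is $k'$) and $\ket{x,\circlearrowleft} \mapsto \ket{\sigma(x),\circlearrowleft}$. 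The key structural facts I would check are: (i) $U_\sigma$ commutes with the flip-flop shift $\opshift$, since $\sigma$ preserving arcs means it commutes with arc-reversal; (ii) $U_\sigma$ commutes with $\opid_N \otimes \opc$, because the coin reflection $\opc$ is symmetric under any permutation of $\{\basis_1,\dots,\basis_d\}$ (as $\ket{c}$ is the uniform-over-$\basis_i$ vector plus the loop term) and $U_\sigma$ acts on the coin register at each vertex by just such a permutation — hence $U_\sigma$ commutes with $\opw$; (iii) $U_\sigma$ commutes with the oracle $\opg = (\opid_N - 2\ketbra{m}{m})\otimes \opid_{d+1}$ because $\sigma$ fixes $m$; and (iv) $U_\sigma$ fixes the initial state $\initlazy$, since $\sigma$ permutes the unmarked vertices among themselves and fixes each $\ket{c}$. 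Consequently $U_\sigma$ commutes with $\lqw$, so $U_\sigma \lqw^{t+1}\initlazy = \lqw^{t+1} U_\sigma \initlazy = \lqw^{t+1}\initlazy$, i.e. the state after $t+1$ steps is $U_\sigma$-invariant. Reading off the coefficient of $\ket{m,\basis_i}$ on both sides: the left side contributes (after relabeling) the coefficient $\alpha_{m,k}$ where $(\sigma^{-1}m,\basis_k)$ maps to $(m,\basis_i)$; since $\sigma(m)=m$ and $\sigma$ sends the $i$-th neighbor of $m$ to its image, invariance forces $\alpha_{m,i}$ to equal $\alpha_{m,j}$ for the pair realized by $\sigma$. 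Since $i,j$ were arbitrary, all $\alpha_{m,\cdot}$ are equal.

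\textbf{The main obstacle} is bookkeeping the induced action on the coin register carefully: when $\sigma$ maps arc $(x,y_k)$ to arc $(\sigma x, \sigma y_k)$, the coin index $k'$ of that image at $\sigma x$ depends on the (arbitrarily fixed) local ordering of neighbors at $\sigma x$, so $U_\sigma$ does not act as a single fixed permutation of coin labels but as a vertex-dependent one. The point that makes this harmless is that $\opc$ and $\ket{c}$ are invariant under \emph{every} permutation of the $\basis$-labels simultaneously applied at a given vertex, so commutation with $\opid_N\otimes\opc$ holds regardless of which permutation occurs at each vertex; and $\opshift$ commutes with $U_\sigma$ because the flip-flop structure only records arc-reversal, which $\sigma$ respects. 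I would state these two commutation facts as the crux and verify them on basis states, then assemble the induction in one line. A cleaner packaging, which I would adopt if it shortens the argument, is to note that the group of all such $U_\sigma$ (ranging over automorphisms fixing $m$) acts on the state, and since $\lqw$ and $\initlazy$ are invariant under the whole group, so is $\lqw^t\initlazy$; restricting attention to the marked-vertex outgoing block $\mathrm{span}\{\ket{m,\basis_1},\dots,\ket{m,\basis_d}\}$, local arc-transitivity says this group acts transitively on the $d$ basis vectors, so any invariant vector in that block has equal coefficients — which is exactly the claim.
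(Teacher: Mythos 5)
Your proof is correct and follows essentially the same route as the paper: both arguments lift an automorphism fixing $m$ (guaranteed by local arc-transitivity) to a unitary on $\mathcal{H}_N \otimes \mathcal{H}_{d+1}$, verify that it leaves $\opshift$, $\opid_N \otimes \opc$, $\opg$, and $\initlazy$ invariant, and conclude that $\lqw^t\initlazy$ is invariant, which forces the amplitudes on the outgoing arcs of $m$ to coincide. Your explicit handling of the vertex-dependent coin relabeling and the observation that commutation makes the induction a one-liner are correct refinements of the same idea.
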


\begin{proof}
  We define an application of an automorphism $\sigma$ of $G$ on quantum
  states and operators acting on $\mathcal{H}_N \otimes \mathcal{H}_{d+1}$ as
  follows.  For any vertex $u$ and its \nth{i} outgoing arc, let
  $\sigma (\ket{u,\basis_i}) = \ket{u',\basis_j}$, where $\sigma(u) = u'$, $v$
  is the endpoint of the $\nth{i}$ arc from~$u$, $\sigma(v) = v'$, and $v'$ is
  the endpoint of the $\nth{j}$ arc from~$u$.  For any vertex $u$ and its
  self-loop, let $\sigma (\ket{u,\selfloop}) = \ket{u',\selfloop}$.
  Generalize and define the action on the adjoint as
  $\sigma(\bra{u, \basis_i}) = \bra{u', \basis_j}$ and
  $\sigma(\bra{u, \selfloop}) = \bra{u', \selfloop}$, and extend to operators
  and the entire Hilbert space by composition and linearity.

  For any automorphism $\sigma$, then $\sigma(\opshift) = \opshift$, since
  $\opshift$ changes the direction of every arc.  Similarly,
  $\sigma(\opid_N \otimes \opc) = (\opid_N \otimes \opc)$, since an
  automorphism preserves the neighborhood of any vertex.
  
  Let $v_1$ and $v_2$ be the endpoints of any two distinct outgoing arcs
  of~$m$ (excluding the self-loop).  Since $G$ is locally arc-transitive,
  there exists an automorphism $\sigma_m$ that fixes $m$ and that maps the arc
  $(m,v_1)$ to $(m,v_2)$.  Let $\Delta_m$ be a set of $d(d-1)$ such
  automorphisms, one for each pair $v_1,v_2$ of distinct neighbors of~$m$.
  For any automorphism $\sigma_m \in \Delta_m$ we have that
  $\sigma_m(\opg) = \opg$ and $\sigma_m(\initlazy) = \initlazy$.

  The proof of the lemma now follows readily by mathematical induction on~$t$.
  Assume that $\sigma_m(\lqw^{t} \initlazy) = \lqw^{t} \initlazy$ for any
  $\sigma_m \in \Delta_m$ immediately prior to the \nth{(t+1)} application
  of~$\lqw$.  Then by the above arguments, it holds immediately after the
  \nth{(t+1)} application of~$\lqw$.  Since
  $\sigma_m(\lqw^{t+1} \initlazy) = \lqw^{t+1} \initlazy$, then
  $\sigma_m(\sum_{i} \alpha_{m,i} \ket{m,\basis_i}) = \sum_{i} \alpha_{m,i}
  \ket{m,\basis_i}$, and thus $\alpha_{m,i} = \alpha_{m,j}$ for all outgoing
  arcs $(m,y_i)$ and $(m,y_j)$.
\end{proof}

\begin{lemma}\label{lem1}
  For all $t\geq 0$,
  \begin{equation*}
    \lqw^t \initlazy = \vlqw^t \initlazy.
  \end{equation*}
\end{lemma}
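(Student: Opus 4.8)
The plan is to prove $\lqw^t \initlazy = \vlqw^t \initlazy$ by induction on $t$, where the key observation is that the two walk operators $\lqw = \opw \cdot \opg$ and $\vlqw = \opw \cdot \widehat\opg$ differ only in their oracle, and by Lemma~\ref{lem_invar} the two oracles $\opg$ and $\widehat\opg$ act identically on the subspace of states the walk actually visits. First I would record the base case $t=0$, which is immediate since $\lqw^0\initlazy = \initlazy = \vlqw^0\initlazy$. For the inductive step, assuming $\lqw^t \initlazy = \vlqw^t \initlazy =: \ket{\psi_t}$, it suffices to show $\opg\ket{\psi_t} = \widehat\opg\ket{\psi_t}$, since then $\lqw^{t+1}\initlazy = \opw\opg\ket{\psi_t} = \opw\widehat\opg\ket{\psi_t} = \vlqw\,\vlqw^t\initlazy = \vlqw^{t+1}\initlazy$.

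To see $\opg\ket{\psi_t} = \widehat\opg\ket{\psi_t}$, I would compare the two operators directly. We have $\opg = (\opid_N - 2\ketbra{m}{m})\otimes\opid_{d+1}$, which acts as the identity everywhere except that it negates every component supported on vertex $m$; that is, $\opid_{(d+1)N} - \opg = 2\sum_{i}\ketbra{m,\basis_i}{m,\basis_i} + 2\ketbra{m,\selfloop}{m,\selfloop}$. On the other hand, $\opid_{(d+1)N} - \widehat\opg = 2\ketbra{m,\selfloop}{m,\selfloop} + 2\ketbra{m,\plusarrow}{m,\plusarrow}$, where $\ket{\plusarrow} = \frac{1}{\sqrt d}\sum_{i=1}^d\ket{\basis_i}$. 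So $(\opg - \widehat\opg)\ket{\psi_t} = 2\ketbra{m,\plusarrow}{m,\plusarrow}\psi_t - 2\sum_i\ketbra{m,\basis_i}{m,\basis_i}\psi_t$. Using the notation of~\eqref{eq_stateafterdsteps}, the second term equals $2\sum_i \alpha_{m,i}\ket{m,\basis_i}$, while the first term equals $\frac{2}{d}\bigl(\sum_i\alpha_{m,i}\bigr)\sum_j\ket{m,\basis_j}$. By Lemma~\ref{lem_invar}, all the $\alpha_{m,i}$ are equal to a common value $\alpha_m$, so $\sum_i\alpha_{m,i} = d\alpha_m$ and the first term becomes $2\alpha_m\sum_j\ket{m,\basis_j}$, which is exactly the second term. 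Hence $(\opg-\widehat\opg)\ket{\psi_t}=0$, completing the induction.

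I expect the only subtlety — rather than a genuine obstacle — to be bookkeeping: making sure that Lemma~\ref{lem_invar} is applied to the correct state. The lemma is stated for $\lqw^t\initlazy$, and the inductive hypothesis gives $\lqw^t\initlazy = \vlqw^t\initlazy$, so the uniform-amplitude property transfers to $\ket{\psi_t}$ and the computation above goes through. One should also note the $\ket{m,\selfloop}$ components play no role in the comparison since both $\opg$ and $\widehat\opg$ negate that component identically, so they cancel in $\opg - \widehat\opg$. No locally arc-transitivity is needed beyond what is already packaged into Lemma~\ref{lem_invar}.
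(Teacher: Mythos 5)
Your proposal is correct and follows essentially the same route as the paper: both reduce the claim to the observation that, by Lemma~\ref{lem_invar}, the component of the state on the outgoing arcs of $m$ is proportional to $\ket{m,\plusarrow}$, on which $\opg$ and $\widehat\opg$ act identically. Your version merely makes explicit the induction and the operator computation $\opg - \widehat\opg = 2\ketbra{m,\plusarrow}{m,\plusarrow} - 2\sum_i\ketbra{m,\basis_i}{m,\basis_i}$ that the paper leaves implicit.
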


\begin{proof}
  The difference between $\lqw$ and $\vlqw$ is that $\lqw$ uses a query to the
  oracle $\opg$ in Eq.~(\ref{def_G}) and $\vlqw$ uses a different query operator
  $\widehat{\opg}$ given by Eq.~(\ref{def_Ghat}).  By Lemma~\ref{lem_invar},
  whenever we apply ${\opg}$ or $\widehat{\opg}$, we have a state where the
  second summand in Eq.~(\ref{eq_stateafterdsteps}) can be written in the form
  $\alpha_{m,\plusarrow} \ket{m,\plusarrow} + \alpha_{m,\selfloop}
  \ket{m,\selfloop}$.  On such states, ${\opg}$ and $\widehat{\opg}$ act
  identically.
\end{proof}

Consider the two quantum walks $\vlqw$ and $\lqiw$. The search space of
$\vlqw$ is the Hilbert space $\mathcal{H}_N \otimes \mathcal{H}_{d+1}$. The
search space of $\lqiw$ is the Hilbert space
$\mathcal{H}_N \otimes \mathcal{H}_N$. By Szegedy's correspondence, the
quantum interpolated walk $\lqiw$ takes place within a smaller subspace
$\Cn^{(d+1)N}$ of the full Hilbert space
$\mathcal{H}_N \otimes \mathcal{H}_N$. We identify the subspace $\Cn^{(d+1)N}$
with $\mathcal{H}_N \otimes \mathcal{H}_c$ by defining an isometry
$\isoe\colon \Cn^{(d+1)N} \rightarrow \mathcal{H}_N \otimes \mathcal{H}_{d+1}$
as follows.

For all vertices $x$ in the $d$-regular arc-transitive graph, and all
neighbors $y_i$ of $x$, let
\begin{equation*}
  \isoe \ket{x, y_i} = \ket{x, e_i}
\end{equation*}
and let
\begin{equation*}
  \isoe \ket{x, x} = 
  \begin{cases}
    \hphantom{-}\ket{x, \circlearrowleft} \hphantom{{}-{}} 
    & \textup{if $x$ is unmarked}\\
    -\ket{x, \circlearrowleft} 
    & \textup{if $x$ is marked.}
  \end{cases}
\end{equation*}

\begin{lemma}\label{lem2}
  \begin{equation*}
    \vlqw = \isoe\cdot \lqiw \cdot \isoe^\dagger.
  \end{equation*}
\end{lemma}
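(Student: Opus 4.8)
The plan is to factor the two walk operators in parallel and match them block by block. First observe that $\isoe$ carries the orthonormal basis $\{\ket{x,y_i}\}_{x,i}\cup\{\ket{x,x}\}_x$ of $\Cn^{(d+1)N}$ to the orthonormal family $\{\ket{x,\basis_i}\}_{x,i}\cup\{\pm\ket{x,\selfloop}\}_x$, which is a basis of $\mathcal{H}_N\otimes\mathcal{H}_{d+1}$; as the two spaces have equal dimension, $\isoe$ is a unitary, so $\isoe^\dagger\isoe=\opid$ on $\Cn^{(d+1)N}$ and $\isoe\isoe^\dagger=\opid$ on $\mathcal{H}_N\otimes\mathcal{H}_{d+1}$. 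Next, writing $\isolazy=\sum_x\ketbra{x,\liw_x}{x}$ for the Szegedy isometry of the interpolated lazy walk, the vectors $\ket{x,\liw_x}$ are orthonormal, so the reflection in the definition of $\lqiw$ is $2\isolazy\isolazy^\dagger-\opid$ and $\lqiw=\opswap\bigl(2\isolazy\isolazy^\dagger-\opid\bigr)$ on its invariant subspace $\Cn^{(d+1)N}$; meanwhile $\vlqw=\opshift\,(\opid_N\otimes\opc)\,\widehat\opg$ by definition. Since $\isolazy$ maps into $\Cn^{(d+1)N}$ and $\opswap$ preserves $\Cn^{(d+1)N}$, we may insert $\isoe^\dagger\isoe=\opid$ and obtain $\isoe\lqiw\isoe^\dagger=\bigl(\isoe\opswap\isoe^\dagger\bigr)\bigl(2(\isoe\isolazy)(\isoe\isolazy)^\dagger-\opid\bigr)$. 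Thus it suffices to establish
\[
\isoe\opswap\isoe^\dagger=\opshift
\qquad\text{and}\qquad
2(\isoe\isolazy)(\isoe\isolazy)^\dagger-\opid=(\opid_N\otimes\opc)\,\widehat\opg .
\]

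The first identity I would verify by chasing basis vectors. On an arc vector $\ket{x,\basis_i}$: $\isoe^\dagger$ returns the arc $\ket{x,y_i}$, $\opswap$ sends it to $\ket{y_i,x}$, and as $x$ is the $j$-th neighbour of $y_i$ this is the $j$-th outgoing arc of $y_i$, so $\isoe$ returns $\ket{y_i,\basis_j}$, which is $\opshift\ket{x,\basis_i}$. On a self-loop vector $\ket{x,\selfloop}$: $\isoe^\dagger$ returns $\pm\ket{x,x}$, $\opswap$ fixes it, and $\isoe$ returns $\pm(\pm\ket{x,\selfloop})=\ket{x,\selfloop}=\opshift\ket{x,\selfloop}$; the two signs cancel in both the marked and the unmarked case.

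The second identity is the core of the lemma and reduces to evaluating $\isoe\isolazy=\sum_x\isoe\ket{x,\liw_x}\bra{x}$. For an unmarked vertex $x$ the interpolated lazy walk coincides with the lazy walk, so $\ket{\liw_x}=\tfrac{1}{\sqrt{d+\ell}}\sum_i\ket{y_i}+\sqrt{\tfrac{\ell}{d+\ell}}\,\ket{x}$ and hence $\isoe\ket{x,\liw_x}=\ket{x,c}$. For the marked vertex, using $s=1-\tfrac{\ell}{d}$ one computes $\liw_{m y_i}=(1-s)\tfrac{1}{d+\ell}=\tfrac{\ell}{d(d+\ell)}$ and $\liw_{mm}=(1-s)\tfrac{\ell}{d+\ell}+s=\tfrac{d}{d+\ell}$; since $\isoe$ attaches a minus sign to the marked self-loop, this gives $\isoe\ket{m,\liw_m}=\sqrt{\tfrac{\ell}{d+\ell}}\,\ket{m,\plusarrow}-\sqrt{\tfrac{d}{d+\ell}}\,\ket{m,\selfloop}=:\ket{m,c'}$. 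Consequently $2(\isoe\isolazy)(\isoe\isolazy)^\dagger-\opid$ is block-diagonal over the vertices, equal to $\opid_N\otimes\opc$ on each unmarked block and to $2\ketbra{m,c'}{m,c'}-\opid$ on the marked block. On the other side, $\widehat\opg$ acts as the identity on every unmarked block, so $(\opid_N\otimes\opc)\widehat\opg$ matches there; on the marked block $\widehat\opg$ restricts to $\widehat R:=\opid_{d+1}-2\ketbra{\selfloop}{\selfloop}-2\ketbra{\plusarrow}{\plusarrow}$, and it remains to check $\opc\,\widehat R=2\ketbra{c'}{c'}-\opid_{d+1}$. I would argue this by noting that $\ket{c}$ and $\ket{c'}$ both lie in the plane $\mathcal{P}=\operatorname{span}\{\ket{\plusarrow},\ket{\selfloop}\}$, and, since $\braket{\plusarrow|\selfloop}=0$ and $\tfrac{d}{d+\ell}+\tfrac{\ell}{d+\ell}=1$, form an orthonormal basis of $\mathcal{P}$. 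Then $\widehat R=-\opid$ on $\mathcal{P}$ and $+\opid$ on $\mathcal{P}^\perp$, while $\opc=-\opid$ on $\mathcal{P}^\perp$ and on $\mathcal{P}$ is the reflection fixing $\ket{c}$ and negating $\ket{c'}$; composing, $\opc\,\widehat R$ is $-\opid$ on $\mathcal{P}^\perp$ and on $\mathcal{P}$ fixes $\ket{c'}$ while negating $\ket{c}$, which is exactly $2\ketbra{c'}{c'}-\opid_{d+1}$. This gives the second identity, and the lemma follows.

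The hard part is not any one calculation but spotting the two coincidences that make the marked block close up: the interpolation value $s=1-\ell/d$ is precisely what forces $\liw_{mm}=\tfrac{d}{d+\ell}$, so that $\ket{c'}$ turns out to be a unit vector orthogonal to $\ket{c}$ inside the two-dimensional coin plane $\mathcal{P}$; and the modified oracle $\widehat\opg$ was defined exactly so that $\opc\,\widehat\opg$, restricted to the marked coin space, is the reflection about that very $\ket{c'}$. Everything else — the unmarked blocks, and the correspondence between $\opswap$ and the flip-flop shift — is routine bookkeeping, the only delicate point being to work inside the invariant subspace $\Cn^{(d+1)N}$ so that $\isoe^\dagger\isoe=\opid$ may be used freely.
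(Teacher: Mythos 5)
Your proof is correct and takes essentially the same route as the paper: the entire argument rests on the same three facts, namely $\isoe\ket{u,\liw_u}=\ket{u,c}$ for unmarked $u$, $\isoe\ket{m,\liw_m}=\ket{m,c^\perp}$ (your $\ket{c'}$ is exactly the paper's $\ket{c^\perp}$), and $\opshift=\isoe\cdot\opswap\cdot\isoe^\dagger$. The only difference is presentational: where the paper rewrites the circuit by noting that an odd number of coin-register reflections composes to a single reflection about $\ket{c^\perp}$ at the marked vertex, you verify the same operator identity explicitly block by block on $\operatorname{span}\{\ket{\plusarrow},\ket{\selfloop}\}$ and its complement, which if anything makes that step of the paper's argument more transparent.
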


\begin{proof}
  The quantum circuit of the quantum walk $\vlqw = \opw \cdot \widehat\opg$ is
  given in Figure~\ref{circForHlqw}.

  \begin{figure}[htbp]
    \centering \begin{tikzpicture}[scale=1.500000,x=1pt,y=1pt]
\filldraw[color=white] (0.000000, -8.500000) rectangle (182.000000, 20.500000);
\draw[color=black] (0.000000,28.500000) -- (182.000000,28.500000);
\draw[color=black] (0.000000,0.000000) -- (182.000000,0.000000);
\draw (26.000000,28.500000) -- (26.000000,0.000000);
\begin{scope}
\draw[fill=white] (26.000000, 28.500000) circle(7.000000pt);
\clip (26.000000, 28.500000) circle(7.000000pt);
\draw (26.000000, 28.500000) node {$m$};
\end{scope}
\begin{scope}
\draw[fill=white] (26.000000, -0.000000) +(-45.000000:28.284271pt and 9.899495pt) -- +(45.000000:28.284271pt and 9.899495pt) -- +(135.000000:28.284271pt and 9.899495pt) -- +(225.000000:28.284271pt and 9.899495pt) -- cycle;
\clip (26.000000, -0.000000) +(-45.000000:28.284271pt and 9.899495pt) -- +(45.000000:28.284271pt and 9.899495pt) -- +(135.000000:28.284271pt and 9.899495pt) -- +(225.000000:28.284271pt and 9.899495pt) -- cycle;
\draw (26.000000, -0.000000) node {$\mathsf{Ref}(\ket{\circlearrowleft})$};
\end{scope}
\draw (78.000000,28.500000) -- (78.000000,0.000000);
\begin{scope}
\draw[fill=white] (78.000000, 28.500000) circle(7.000000pt);
\clip (78.000000, 28.500000) circle(7.000000pt);
\draw (78.000000, 28.500000) node {$m$};
\end{scope}
\begin{scope}
\draw[fill=white] (78.000000, -0.000000) +(-45.000000:28.284271pt and 9.899495pt) -- +(45.000000:28.284271pt and 9.899495pt) -- +(135.000000:28.284271pt and 9.899495pt) -- +(225.000000:28.284271pt and 9.899495pt) -- cycle;
\clip (78.000000, -0.000000) +(-45.000000:28.284271pt and 9.899495pt) -- +(45.000000:28.284271pt and 9.899495pt) -- +(135.000000:28.284271pt and 9.899495pt) -- +(225.000000:28.284271pt and 9.899495pt) -- cycle;
\draw (78.000000, -0.000000) node {$\mathsf{Ref}(\ket{\plusarrow})$};
\end{scope}
\draw (130.000000,28.500000) -- (130.000000,0.000000);
\begin{scope}
\draw[fill=white] (130.000000, 28.500000) circle(7.000000pt);
\clip (130.000000, 28.500000) circle(7.000000pt);
\draw (130.000000, 28.500000) node {$x$};
\end{scope}
\begin{scope}
\draw[fill=white] (130.000000, -0.000000) +(-45.000000:28.284271pt and 9.899495pt) -- +(45.000000:28.284271pt and 9.899495pt) -- +(135.000000:28.284271pt and 9.899495pt) -- +(225.000000:28.284271pt and 9.899495pt) -- cycle;
\clip (130.000000, -0.000000) +(-45.000000:28.284271pt and 9.899495pt) -- +(45.000000:28.284271pt and 9.899495pt) -- +(135.000000:28.284271pt and 9.899495pt) -- +(225.000000:28.284271pt and 9.899495pt) -- cycle;
\draw (130.000000, -0.000000) node {$\mathsf{Ref}(\ket{c})$};
\end{scope}
\draw (169.000000,28.500000) -- (169.000000,0.000000);
\begin{scope}
\draw[fill=white] (169.000000, 14.250000) +(-45.000000:9.899495pt and 30.052038pt) -- +(45.000000:9.899495pt and 30.052038pt) -- +(135.000000:9.899495pt and 30.052038pt) -- +(225.000000:9.899495pt and 30.052038pt) -- cycle;
\clip (169.000000, 14.250000) +(-45.000000:9.899495pt and 30.052038pt) -- +(45.000000:9.899495pt and 30.052038pt) -- +(135.000000:9.899495pt and 30.052038pt) -- +(225.000000:9.899495pt and 30.052038pt) -- cycle;
\draw (169.000000, 14.250000) node {$\opshift$};
\end{scope}
\end{tikzpicture}
    \caption{Quantum circuit of $\vlqw$.}
    \label{circForHlqw}
  \end{figure}

  The two states $\ket{\circlearrowleft}$ and $\ket{\plusarrow}$ are
  orthogonal, and they span a two-dimensional subspace of the coin
  space~$\mathcal{H}_c$. The coin state
  $\ket{c} = \sqrt{\frac{d}{d+\ell}} \ket{\plusarrow} +
  \sqrt{\frac{\ell}{d+\ell}} \ket{\circlearrowleft}$ is in this
  two-dimensional subspace. Let
  $\ket{c^\perp} = \sqrt{\frac{\ell}{d+\ell}} \ket{\plusarrow} -
  \sqrt{\frac{d}{d+\ell}} \ket{\circlearrowleft}$ be the state that is
  orthogonal to the coin state~$\ket{c}$ in this two-dimensional subspace.  In
  Figure~\ref{circForHlqw}, we apply three reflections on the coin register if
  the vertex $x$ is marked, and we apply a single reflection if the vertex is
  unmarked. In~either case, we apply an odd number of reflections on the coin
  register. We can therefore rewrite the circuit in Figure~\ref{circForHlqw}
  as the equivalently acting circuit given in %
  Figure~\ref{circForHlqw simplified}.

  \begin{figure}[htbp]
    \centering \begin{tikzpicture}[scale=1.500000,x=1pt,y=1pt]
\filldraw[color=white] (0.000000, -8.500000) rectangle (130.000000, 20.500000);
\draw[color=black] (0.000000,28.500000) -- (130.000000,28.500000);
\draw[color=black] (0.000000,0.000000) -- (130.000000,0.000000);
\draw (26.000000,28.500000) -- (26.000000,0.000000);
\begin{scope}
\draw[fill=white] (26.000000, 28.500000) circle(7.000000pt);
\clip (26.000000, 28.500000) circle(7.000000pt);
\draw (26.000000, 28.500000) node {$m$};
\end{scope}
\begin{scope}
\draw[fill=white] (26.000000, -0.000000) +(-45.000000:28.284271pt and 9.899495pt) -- +(45.000000:28.284271pt and 9.899495pt) -- +(135.000000:28.284271pt and 9.899495pt) -- +(225.000000:28.284271pt and 9.899495pt) -- cycle;
\clip (26.000000, -0.000000) +(-45.000000:28.284271pt and 9.899495pt) -- +(45.000000:28.284271pt and 9.899495pt) -- +(135.000000:28.284271pt and 9.899495pt) -- +(225.000000:28.284271pt and 9.899495pt) -- cycle;
\draw (26.000000, -0.000000) node {$\mathsf{Ref}(\ket{c^\perp})$};
\end{scope}
\draw (78.000000,28.500000) -- (78.000000,0.000000);
\begin{scope}
\draw[fill=white] (78.000000, 28.500000) circle(7.000000pt);
\clip (78.000000, 28.500000) circle(7.000000pt);
\draw (78.000000, 28.500000) node {$u$};
\end{scope}
\begin{scope}
\draw[fill=white] (78.000000, -0.000000) +(-45.000000:28.284271pt and 9.899495pt) -- +(45.000000:28.284271pt and 9.899495pt) -- +(135.000000:28.284271pt and 9.899495pt) -- +(225.000000:28.284271pt and 9.899495pt) -- cycle;
\clip (78.000000, -0.000000) +(-45.000000:28.284271pt and 9.899495pt) -- +(45.000000:28.284271pt and 9.899495pt) -- +(135.000000:28.284271pt and 9.899495pt) -- +(225.000000:28.284271pt and 9.899495pt) -- cycle;
\draw (78.000000, -0.000000) node {$\mathsf{Ref}(\ket{c})$};
\end{scope}
\draw (117.000000,28.500000) -- (117.000000,0.000000);
\begin{scope}
\draw[fill=white] (117.000000, 14.250000) +(-45.000000:9.899495pt and 30.052038pt) -- +(45.000000:9.899495pt and 30.052038pt) -- +(135.000000:9.899495pt and 30.052038pt) -- +(225.000000:9.899495pt and 30.052038pt) -- cycle;
\clip (117.000000, 14.250000) +(-45.000000:9.899495pt and 30.052038pt) -- +(45.000000:9.899495pt and 30.052038pt) -- +(135.000000:9.899495pt and 30.052038pt) -- +(225.000000:9.899495pt and 30.052038pt) -- cycle;
\draw (117.000000, 14.250000) node {$\opshift$};
\end{scope}
\end{tikzpicture}
    \caption{Equivalent circuit of $\vlqw$.}
    \label{circForHlqw simplified}
  \end{figure}

  For the lazy random walk $\oplw$ with self-loops of weight~$\ell$, the
  neighborhood state of any vertex~$x$ on a regular graph is
  $\ket{\oplw_x} = \sqrt{\frac{d}{d+\ell}} \ket{\opp_x} +
  \sqrt{\frac{\ell}{d+\ell}} \ket{x}$.  The interpolation of a random walk
  changes the neighborhood state of any marked vertex.  With our choice of
  $s = 1 - \frac{\ell}{d}$, they become
  \begin{equation*}
    \begin{array}{ll}
      \ket{\liw_u} &= \sqrt{\frac{d}{d+\ell}} \ket{\opp_u} + \sqrt{\frac{\ell}{d+\ell}} \ket{u} \\
      \ket{\liw_m} &= \sqrt{\frac{\ell}{d+\ell}} \ket{\opp_m} + \sqrt{\frac{d}{d+\ell}} \ket{m}.
    \end{array}
  \end{equation*}
  Here $u$ denotes any unmarked vertex, and $m$ denotes the unique marked
  vertex. Applying the isometry~$\isoe$ yields that
  \begin{equation}\label{eq:isometry E on basis states}
    \begin{array}{ll}
      \isoe\, \ket{u, \liw_u} &= \ket{u, c} \\
      \isoe\, \ket{m, \liw_m} &= \ket{m, c^\perp}.
    \end{array}
  \end{equation}
  By definition, the $\opshift$ operator is equivalent to the $\opswap$
  operator under the isometry,
  \begin{equation}\label{shiftswapeq}
    \opshift = \isoe \cdot \opswap \cdot \isoe^\dagger.
  \end{equation}
  Eqs.~\ref{eq:isometry E on basis states} and~\ref{shiftswapeq} permit us to
  write the coined quantum walk circuit in Figure~\ref{circForHlqw simplified}
  as a circuit of the quantum interpolated
  walk~$\isoe \cdot \lqiw \cdot \isoe^\dagger$, as in
  Figure~\ref{circForLqiw}.
  \begin{figure}[htbp]
    \centering \begin{tikzpicture}[scale=1.500000,x=1pt,y=1pt]
\filldraw[color=white] (0.000000, -8.500000) rectangle (202.000000, 20.500000);
\draw[color=black] (0.000000,28.500000) -- (202.000000,28.500000);
\draw[color=black] (0.000000,0.000000) -- (202.000000,0.000000);
\draw (13.000000,28.500000) -- (13.000000,0.000000);
\begin{scope}
\draw[fill=white] (13.000000, 14.250000) +(-45.000000:9.899495pt and 30.052038pt) -- +(45.000000:9.899495pt and 30.052038pt) -- +(135.000000:9.899495pt and 30.052038pt) -- +(225.000000:9.899495pt and 30.052038pt) -- cycle;
\clip (13.000000, 14.250000) +(-45.000000:9.899495pt and 30.052038pt) -- +(45.000000:9.899495pt and 30.052038pt) -- +(135.000000:9.899495pt and 30.052038pt) -- +(225.000000:9.899495pt and 30.052038pt) -- cycle;
\draw (13.000000, 14.250000) node {$\isoe^\dagger$};
\end{scope}
\draw (57.000000,28.500000) -- (57.000000,0.000000);
\begin{scope}
\draw[fill=white] (57.000000, 28.500000) circle(7.000000pt);
\clip (57.000000, 28.500000) circle(7.000000pt);
\draw (57.000000, 28.500000) node {$m$};
\end{scope}
\begin{scope}
\draw[fill=white] (57.000000, -0.000000) +(-45.000000:35.355339pt and 9.899495pt) -- +(45.000000:35.355339pt and 9.899495pt) -- +(135.000000:35.355339pt and 9.899495pt) -- +(225.000000:35.355339pt and 9.899495pt) -- cycle;
\clip (57.000000, -0.000000) +(-45.000000:35.355339pt and 9.899495pt) -- +(45.000000:35.355339pt and 9.899495pt) -- +(135.000000:35.355339pt and 9.899495pt) -- +(225.000000:35.355339pt and 9.899495pt) -- cycle;
\draw (57.000000, -0.000000) node {$\mathsf{Ref}(\ket{\liw_m})$};
\end{scope}
\draw (119.000000,28.500000) -- (119.000000,0.000000);
\begin{scope}
\draw[fill=white] (119.000000, 28.500000) circle(7.000000pt);
\clip (119.000000, 28.500000) circle(7.000000pt);
\draw (119.000000, 28.500000) node {$u$};
\end{scope}
\begin{scope}
\draw[fill=white] (119.000000, -0.000000) +(-45.000000:35.355339pt and 9.899495pt) -- +(45.000000:35.355339pt and 9.899495pt) -- +(135.000000:35.355339pt and 9.899495pt) -- +(225.000000:35.355339pt and 9.899495pt) -- cycle;
\clip (119.000000, -0.000000) +(-45.000000:35.355339pt and 9.899495pt) -- +(45.000000:35.355339pt and 9.899495pt) -- +(135.000000:35.355339pt and 9.899495pt) -- +(225.000000:35.355339pt and 9.899495pt) -- cycle;
\draw (119.000000, -0.000000) node {$\mathsf{Ref}(\ket{\liw_u})$};
\end{scope}
\draw (163.000000,28.500000) -- (163.000000,0.000000);
\begin{scope}
\draw[fill=white] (163.000000, 14.250000) +(-45.000000:9.899495pt and 30.052038pt) -- +(45.000000:9.899495pt and 30.052038pt) -- +(135.000000:9.899495pt and 30.052038pt) -- +(225.000000:9.899495pt and 30.052038pt) -- cycle;
\clip (163.000000, 14.250000) +(-45.000000:9.899495pt and 30.052038pt) -- +(45.000000:9.899495pt and 30.052038pt) -- +(135.000000:9.899495pt and 30.052038pt) -- +(225.000000:9.899495pt and 30.052038pt) -- cycle;
\draw (163.000000, 14.250000) node {\rotatebox{90}{$\opswap$}};
\end{scope}
\draw (189.000000,28.500000) -- (189.000000,0.000000);
\begin{scope}
\draw[fill=white] (189.000000, 14.250000) +(-45.000000:9.899495pt and 30.052038pt) -- +(45.000000:9.899495pt and 30.052038pt) -- +(135.000000:9.899495pt and 30.052038pt) -- +(225.000000:9.899495pt and 30.052038pt) -- cycle;
\clip (189.000000, 14.250000) +(-45.000000:9.899495pt and 30.052038pt) -- +(45.000000:9.899495pt and 30.052038pt) -- +(135.000000:9.899495pt and 30.052038pt) -- +(225.000000:9.899495pt and 30.052038pt) -- cycle;
\draw (189.000000, 14.250000) node {$\isoe$};
\end{scope}
\end{tikzpicture}
    \caption{Circuit of $\vlqw$ written as a circuit of the quantum
      interpolated walk $\isoe \cdot \lqiw \cdot \isoe^\dagger$.}
    \label{circForLqiw}
  \end{figure}
\end{proof}

We remark that the value of $\ell$ used in~\cite{RW20} is
$\ell = \frac{d}{N}$, and that the value of $\ell$ proposed
in~\cite{WZWY17arxiv} is $\ell = \frac{d}{N-1}$.  The difference between these
two values of $\ell$ is an additive term of order~$\frac{1}{N^2}$.  In this
work, we pick the value of $\ell$ to be equal to $\frac{d}{N}$, so that the
correspondence with~\cite{RW20} in Lemma~\ref{lem2} is exact.  The value of
$s$ used in~\cite{KMOR16}, and in the simulation in Section~7 in~\cite{DH17b},
is $1-\frac{1}{N-1}$, which corresponds to a value of $\ell$ equal to
$\frac{d}{N-1}$.  By Eqs.~(192) and~(21) in~\cite{KMOR16}, our slightly
different choice of $\ell$ implies that the $\intht{\iw}$ used in our paper is
a factor of order $\frac{1}{N}$ larger than the $\intht{\iw}$ used
in~\cite{KMOR16}.  This negligible factor does not change the results stated
in this paper.

By Lemma~\ref{lem1}, $\cotqht{\lqw}{\initlazy} =
\cotqht{\vlqw}{\initlazy}$. By Lemma~\ref{lem2} and the definition of the
isometry $\isoe$, $\cotqht{\vlqw}{\initlazy} = \cotqht{\lqiw}{\initlip}$. We
next show the exact relationship between $\cotqht{\lqiw}{\initlip}$ and
$\cotqht{\qiw}{\initip}$.

\begin{lemma}\label{lem3}
  For $\ell = \frac{d}{N}$ and $s = 1 - \frac{\ell}{d} = 1 - \frac{1}{N}$,
  \begin{equation*}
    \cotqht{\lqiw}{\initlip}^2 = \frac{N+1}{N}\cotqht{\qiw}{\initip}^2 + \frac{1}{2N-1}.
  \end{equation*}
\end{lemma}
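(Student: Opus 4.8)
The plan is to compare the two quantum interpolated walks $\lqiw$ and $\qiw$ via their underlying discriminant matrices $\dm{\liw}$ and $\dm{\iw}$, since the cotangent quantum hitting time is, by the Szegedy-correspondence spectral decomposition recalled in Section~\ref{sec_tech}, entirely determined by the eigenvalues $\lambda_k$ of the discriminant and the overlaps $\alpha_k = \braket{\lambda_k|\sqrt{\bar\uppi}}$. Concretely, for a quantum interpolated walk built from an interpolated walk with discriminant eigenvalues $\cos\theta_k$, one has $\cotqht{\cdot}{\cdot}^2 = \sum_{\lambda_k\neq 1} |\alpha_k|^2 \cot^2\frac{\theta_k}{2} = \sum_{\lambda_k\neq 1}|\alpha_k|^2\,\frac{1+\lambda_k}{1-\lambda_k}$, using $\cot^2\frac{\theta}{2} = \frac{1+\cos\theta}{1-\cos\theta}$. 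So the whole lemma reduces to an explicit comparison of the two discriminants and the two stationary vectors.

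First I would write down $\dm{\liw}$ in terms of $\dm{\iw}$. The lazy walk is $\oplw = \frac{d}{d+\ell}\opp + \frac{\ell}{d+\ell}\opid_N$, and since the graph is regular the stationary distribution is uniform, so on unmarked rows the interpolated lazy walk factors cleanly; with $\ell = d/N$ one has $\frac{d}{d+\ell} = \frac{N}{N+1}$ and $\frac{\ell}{d+\ell} = \frac{1}{N+1}$. The key point is that the interpolation parameters are matched ($s = 1-\ell/d = 1-\tfrac1N$), so the absorbing modification at the marked vertex lines up; I expect to find that $\dm{\liw} = \frac{N}{N+1}\dm{\iw} + \frac{1}{N+1}\opid_N$ (possibly up to the marked-row bookkeeping, which I'd check directly from the neighborhood states $\ket{\liw_u},\ket{\liw_m}$ displayed in the proof of Lemma~\ref{lem2}). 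This relation is an affine map of a single matrix, hence $\dm{\liw}$ and $\dm{\iw}$ share eigenvectors, and if $\lambda_k$ is an eigenvalue of $\dm{\iw}$ then $\mu_k := \frac{N}{N+1}\lambda_k + \frac{1}{N+1} = \frac{N\lambda_k + 1}{N+1}$ is the corresponding eigenvalue of $\dm{\liw}$. In particular $\lambda_k = 1 \iff \mu_k = 1$, so the sets of "non-stationary" indices coincide. Then $\frac{1+\mu_k}{1-\mu_k} = \frac{(N+1)+N\lambda_k+1}{(N+1)-N\lambda_k-1} = \frac{N(1+\lambda_k)+2}{N(1-\lambda_k)} = \frac{1+\lambda_k}{1-\lambda_k} + \frac{2}{N(1-\lambda_k)}$.

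Next I would track the stationary vector. Both walks use $\bar\uppi$ = the uniform distribution over unmarked vertices (the lazy walk has the same uniform $\uppi$ since adding equal self-loops to a regular graph doesn't change $\uppi$), so $\sqrt{\bar\uppi}$ is the same vector in both cases, and since the discriminants share eigenvectors, the coefficients $\alpha_k$ are identical. Therefore
\[
  \cotqht{\lqiw}{\initlip}^2 = \sum_{\lambda_k\neq 1}|\alpha_k|^2\frac{1+\mu_k}{1-\mu_k}
  = \sum_{\lambda_k\neq 1}|\alpha_k|^2\frac{1+\lambda_k}{1-\lambda_k} + \frac{2}{N}\sum_{\lambda_k\neq 1}\frac{|\alpha_k|^2}{1-\lambda_k}
  = \cotqht{\qiw}{\initip}^2 + \frac{2}{N}\intht{\iw},
\]
recognizing the second sum as the interpolated hitting time $\intht{\iw}$ from Eq.~(\ref{ip hitting time}). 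To finish, I need the stated closed form $\frac{N+1}{N}\cotqht{\qiw}{\initip}^2 + \frac{1}{2N-1}$, which means I must pin down $\intht{\iw}$ and $\cotqht{\qiw}{\initip}^2$ exactly (not just asymptotically) for $s = 1-\tfrac1N$ and a \emph{unique} marked vertex. Here I would invoke the exact formulas from \cite{KMOR16,DH17b}: with a single marked vertex and interpolation $s$, the interpolated hitting time and the cotangent quantum hitting time of $\qiw$ are both rational functions of $N$ and $s$ that can be written in terms of the ordinary hitting time, and one gets identities of the form $\intht{\iw} = \frac{N-1}{?}\,(\cdots)$; matching constants (the $\frac{N+1}{N}$ prefactor and the additive $\frac{1}{2N-1}$) forces $\intht{\iw}$ and $\cotqht{\qiw}{\initip}^2$ to satisfy one further linear relation, which I would derive from Eq.~(21) and Eq.~(192) of \cite{KMOR16} (referenced in the remark after Lemma~\ref{lem2}) specialized to $s = 1-\tfrac1N$.

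The main obstacle I anticipate is this last bookkeeping step: getting the \emph{exact} constants right. The spectral/affine argument giving $\cotqht{\lqiw}{\initlip}^2 = \cotqht{\qiw}{\initip}^2 + \frac{2}{N}\intht{\iw}$ is robust, but converting the term $\frac{2}{N}\intht{\iw}$ into $\frac{1}{N}\cotqht{\qiw}{\initip}^2 + \frac{1}{2N-1}$ requires the precise relationship between the interpolated hitting time at $s=1-\tfrac1N$, the quantum cotangent hitting time, and the unique-marked-vertex hitting time $\sht$ — relationships that hold exactly only because the marked set is a singleton and $\bar\uppi$ is uniform on a regular graph. I would be careful that the marked-vertex row of $\dm{\liw}$ versus $\dm{\iw}$ really does obey the same affine law (the neighborhood states $\ket{\liw_m}$ and $\ket{\iw_m}$ have their weights on $\ket{m}$ swapped with the rest, per Lemma~\ref{lem2}), since a discrepancy there would add a rank-one correction that must be reconciled with the clean statement; I expect it works out because the isometry $\isoe$ in Lemma~\ref{lem2} already absorbed exactly that sign/weight swap.
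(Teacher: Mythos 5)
Your proposal is correct and follows essentially the same route as the paper: the affine relation $\liw = \frac{N}{N+1}\iw + \frac{1}{N+1}\opid_N$ (which does hold exactly on the marked row as well, so no rank-one correction arises), shared eigenvectors with eigenvalues $\widehat\lambda_k = \frac{N\lambda_k+1}{N+1}$, and the identity $\cot^2\frac{\theta}{2} = \frac{1+\lambda}{1-\lambda} = \frac{2}{1-\lambda}-1$ linking the cotangent quantum hitting time to $\intht{\iw}$. The one step you defer to \cite{KMOR16} is exactly the paper's relation $\cotqht{\qiw}{\initip}^2 = 2\,\intht{\iw} - \frac{p_M}{1-s(1-p_M)}$ (i.e.\ $\sum_{\lambda_k\neq 1}\lvert\alpha_k\rvert^2 = \frac{p_M}{1-s(1-p_M)} = \frac{N}{2N-1}$ for $p_M=\frac{1}{N}$, $s=1-\frac{1}{N}$), which the paper likewise asserts by direct calculation and which immediately converts your term $\frac{2}{N}\intht{\iw}$ into $\frac{1}{N}\cotqht{\qiw}{\initip}^2 + \frac{1}{2N-1}$.
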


\begin{proof}
  By definitions of $\iw$ and $\liw$, given by Eqs.~\ref{lazy walk}
  and~\ref{interpolated lazy walk}, respectively,
  \begin{equation*}
    \liw = \frac{N}{N+1} \cdot \iw + \frac{1}{N+1} \cdot \opid_N.
  \end{equation*}
  Hence $\liw$ and $\iw$ have the same eigenvectors
  $\ket{\widehat\lambda_k} = \ket{\lambda_k}$ and corresponding eigenvalues
  \begin{equation}\label{equation of eval}
    \widehat\lambda_k = \frac{N}{N+1} \lambda_k + \frac{1}{N+1}.
  \end{equation}
  Then by the definition of the interpolated hitting time in Eq.~(\ref{ip
    hitting time}),
  \begin{equation}\label{relationship of ip ht}
    \intht{\liw} = \frac{N+1}{N} \cdot \intht{\iw}.
  \end{equation}
  Given an interpolated walk $\iw$ and its Szegedy's correspondence $\qiw$, we
  have
  \begin{equation}\label{ip ht and cot qht}
    \cotqht{\qiw}{\initip}^2 = 2 \intht{\iw} - \frac{p_M}{1-s(1-p_M)},
  \end{equation}
  by direct calculation using Definition~\ref{cotangent qht} and Eq.~(\ref{ip
    hitting time}).  Here $p_M$ is the probability of drawing a marked vertex
  from the stationary distribution~$\uppi$.  Since our graph is regular and
  there is a unique marked vertex, $p_M = \uppi_m = \frac{1}{N}$.
  Lemma~\ref{lem3} follows from plugging Eq.~(\ref{ip ht and cot qht}) into
  Eq.~(\ref{relationship of ip ht}) on both sides.
\end{proof}

Theorem~\ref{thm1} now follows from Lemmas~\ref{lem1},~\ref{lem2}
and~\ref{lem3}. The fact that $\cotqht{\qiw}{\initip}$ is in
$O\bigl(\sqrt{\sht(\opp, \{m\})}\bigr)$ follows from~\cite{KMOR16}
and~\cite{DH17b}.

\section{Proof of Theorem~\ref{thm2}}\label{proof of thm2}

Lemmas~\ref{lem1} and~\ref{lem2} in the previous section give exact
relationships between the lackadaisical quantum walk operator $\lqw$ and the
two intermediate walk operators $\vlqw$ and~$\lqiw$.  In this section, we then
show that the distance between the intermediate walk operator $\lqiw$ and the
quantum interpolated walk operator $\qiw$ is bounded, when applied on their
respective initial states.  The bound is given in Lemma~\ref{lem6} below, and
it follows from Lemmas~\ref{lem4} and~\ref{lem5}.  Theorem~\ref{thm2} follows from
Lemmas~\ref{lem2} and~\ref{lem6}.

By Szegedy's correspondence, the eigenvectors $\ket{\phi_k^\pm}$ of $\qiw$ are
in the subspace
\begin{equation*}
  \text{span}\big\{\isops\ket{\lambda_k}, \opswap\cdot \isops\ket{\lambda_k} \big\} = \text{span}\big\{\isops\ket{\lambda_k}, (\isops\ket{\lambda_k})^\perp\big\},
\end{equation*}
and the eigenvectors $\lazyphi{k}{\pm}$ of $\lqiw$ are in the subspace
\begin{equation*}
  \text{span}\big\{\isolazy\ket{\lambda_k}, \opswap\cdot \isolazy\ket{\lambda_k} \big\} = \text{span}\big\{\isolazy\ket{\lambda_k}, (\isolazy\ket{\lambda_k})^\perp\big\}.
\end{equation*}
We define an isometry
\begin{equation*}
  \isoone = \sum_x \big(\ketbra{x, \liw_x}{x, \iw_x} +  \ketbra{x, \liw_x^\perp}{x, \iw_x^\perp}\big),
\end{equation*}
where $\ket{x, \iw_x^\perp}$ is orthogonal to $\ket{x, \iw_x}$ in the subspace
spanned by $\{ \ket{x, \iw_x}, \ket{\iw_x ,x}\}$ and $\ket{x, \liw_x^\perp}$
is orthogonal to $\ket{x, \liw_x}$ in the subspace spanned by
$\{ \ket{x, \liw_x}, \ket{\liw_x, x}\}$. The isometry $\isoone$ satisfies that
\begin{equation*}
  \isoone\colon
  \left\{
    \begin{aligned}
      &\ket{\phi_k^+} &&\mapsto&& \lazyphi{k}{+} \\
      &\ket{\phi_k^-} &&\mapsto&& \lazyphi{k}{-} \\
      &\ket{\phi_{\nn}} &&\mapsto&& \lazyphi{\nn}{}.
    \end{aligned}
  \right.
\end{equation*}
By Eq.~(\ref{decompose state}), applying $\isoone$ on the state $\qiw^t \initip$
changes from the eigenspace of $\qiw$ to the eigenspace of $\lqiw$,
\begin{equation*}
  \isoone \cdot \qiw^t \initip = \alpha_{\nn} {\lazyphi{n}{}} +  \frac{1}{\sqrt{2}} \sum_{k=1}^{\nn-1} \alpha_k \bigl((e^{i\theta_k})^t {\lazyphi{k}{+}} + (e^{-i\theta_k})^t {\lazyphi{k}{-}}\bigr).
\end{equation*}
In the proof of Lemma~\ref{lem4} below, we require a second isometry, which is
also a projection,
\begin{equation*}
  \isotwo = \sum_x \bigl( \ketbra{x, \iw_x}{x, \iw_x} + \ketbra{x, \iw_x^\perp}{x, \iw_x^\perp} \bigr).
\end{equation*}
Applying $\isotwo$ on $\qiw^t \initip$ does not change the state itself,
\begin{equation*}
  \isotwo \cdot \qiw^t \initip= \qiw^t \initip.
\end{equation*}
Note that since the states $\ket{\iw_x}$ and $\ket{\liw_x}$ are close for all
vertices $x$, the $\ell_2$-distance between $\isoone$ and $\isotwo$ is small,
which is $O\bigl(\frac{1}{\sqrt{N}}\bigr)$ by direct calculation.

\begin{lemma}\label{lem4}
  For all $t \geq 0$,
  \begin{equation*}
    \dnorm[\big]{\isoone \cdot \qiw^t \cdot \initip -  \qiw^t \cdot \initip}_2 \in O\biggl(\frac{1}{\sqrt{N}}\biggr).
  \end{equation*}
\end{lemma}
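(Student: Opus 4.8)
The plan is to bound the difference by writing $\isoone \cdot \qiw^t \cdot \initip - \qiw^t \cdot \initip = (\isoone - \isotwo) \cdot \qiw^t \cdot \initip$, using the observation already made in the text that $\isotwo \cdot \qiw^t \initip = \qiw^t \initip$ because $\qiw^t \initip$ lies entirely in the span of $\{\ket{x,\iw_x}, \ket{x,\iw_x^\perp}\}_x$, which is exactly the image on which $\isotwo$ acts as the identity. Thus it suffices to bound $\dnorm{(\isoone - \isotwo) \cdot \ket{\psi}}_2$ for $\ket{\psi} = \qiw^t\initip$, which is a unit vector supported on that span. I would therefore estimate the operator norm of $\isoone - \isotwo$ restricted to this span, since $\dnorm{(\isoone-\isotwo)\ket{\psi}}_2 \leq \dnorm{\isoone - \isotwo}_{\mathrm{op}}$.

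First I would compute how $\isoone - \isotwo$ acts on each two-dimensional block $\mathrm{span}\{\ket{x,\iw_x}, \ket{x,\iw_x^\perp}\}$: it sends $\ket{x,\iw_x}$ to $\ket{x,\liw_x} - \ket{x,\iw_x}$ and $\ket{x,\iw_x^\perp}$ to $\ket{x,\liw_x^\perp} - \ket{x,\iw_x^\perp}$. Since the blocks for distinct $x$ are orthogonal on the source side (they live in $\mathrm{span}\{\ket{x}\}\otimes\mathcal H_N$), the operator norm of $\isoone - \isotwo$ on the relevant span is the maximum over $x$ of the norm of its action on that single block, up to the (slight) non-orthogonality of the image blocks, which only costs a constant factor. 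So the task reduces to bounding $\dnorm{\ket{x,\liw_x} - \ket{x,\iw_x}}_2$ and $\dnorm{\ket{x,\liw_x^\perp} - \ket{x,\iw_x^\perp}}_2$ uniformly in $x$. From Eq.~(\ref{equation of eval}), $\ket{\liw_x}$ is the (normalized) superposition built from $\liw = \frac{N}{N+1}\iw + \frac{1}{N+1}\opid_N$, so $\ket{\liw_x}$ and $\ket{\iw_x}$ differ only by a perturbation of relative size $\Theta(1/N)$ in the amplitudes, and after renormalization the Euclidean distance between the two unit vectors is $O(1/\sqrt N)$ — this is the ``direct calculation'' the text alludes to, and it extends to the orthogonal complements within each two-dimensional block since an $O(1/\sqrt N)$ rotation of $\ket{\iw_x}$ induces an $O(1/\sqrt N)$ rotation of $\ket{\iw_x^\perp}$.

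Putting these together gives $\dnorm{(\isoone - \isotwo)\ket{\psi}}_2 = O(1/\sqrt N)$ for the unit vector $\ket{\psi} = \qiw^t\initip$, uniformly in $t$, which is the claimed bound. The main obstacle is the uniformity in $x$ of the per-block estimate: one must check that the distance $\dnorm{\ket{\liw_x}-\ket{\iw_x}}_2$ is $O(1/\sqrt N)$ for \emph{every} vertex, including the marked vertex $m$ where the interpolation alters the neighborhood state most drastically (recall $\ket{\liw_m} = \sqrt{\tfrac{\ell}{d+\ell}}\ket{\opp_m} + \sqrt{\tfrac{d}{d+\ell}}\ket{m}$). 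Since $\ell = d/N$ is itself $O(d/N)$, the ``large'' rotation at $m$ is in fact already $O(1/\sqrt N)$ in angle, so the worst case is still controlled; verifying this carefully, and confirming that the image-side non-orthogonality between blocks contributes only a constant multiplicative factor, is where the real work lies.
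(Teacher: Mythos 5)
Your proposal is correct and takes essentially the same route as the paper: insert $\isotwo$ (which fixes $\qiw^t\initip$), factor the difference as $(\isoone-\isotwo)\cdot\qiw^t\initip$, and bound it by the operator norm $\dnorm{\isoone-\isotwo}_2\in O\bigl(\tfrac{1}{\sqrt N}\bigr)$, which the paper likewise dispatches as a ``direct calculation'' resting on the closeness of $\ket{\iw_x}$ and $\ket{\liw_x}$ for every $x$, including $x=m$. One small imprecision in your elaboration: the blocks $\mathrm{span}\{\ket{x,\iw_x},\ket{x,\iw_x^\perp}\}$ are \emph{not} contained in $\mathrm{span}\{\ket{x}\}\otimes\mathcal H_N$, since $\ket{x,\iw_x^\perp}$ is built from $\ket{\iw_x,x}$, so blocks at adjacent vertices fail to be exactly orthogonal on the source side as well as the image side --- but you correctly flag the block bookkeeping as the residual work, and it does not affect the $O\bigl(\tfrac{1}{\sqrt N}\bigr)$ conclusion.
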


\begin{proof}
  \begin{align*} 
    & \hphantom{{}={}}
      \dnorm[\big]{\isoone \cdot \qiw^t \cdot \initip -  \qiw^t \cdot \initip}_2\\
    &= \dnorm[\big]{{\isoone \cdot \qiw^t \cdot \initip -  \isotwo \cdot \qiw^t \cdot \initip}}_2 \\
    &\leq \dnorm[\big]{\isoone - \isotwo}_2 \cdot \dnorm[\big]{\qiw^t \cdot \initip}_2
       = \dnorm[\big]{\isoone - \isotwo}_2
      \in O\biggl(\frac{1}{\sqrt{N}}\biggr).
  \end{align*}
\end{proof}

Next consider the following $\ell_2$-distance
\begin{align*}
  & \hphantom{{}={}}
    \dnorm[\big]{ \isoone \cdot  \qiw^t \cdot \initip - \lqiw^t \cdot \initlip }_2 \\
  & = \dnorm[\bigg]{\frac{1}{\sqrt{2}} \sum_{k=1}^{\nn-1} \alpha_k \Bigl((e^{ti\theta_k} - e^{ti\widehat\theta_k}){\lazyphi{k}{+}} + (e^{-ti\theta_k} - e^{-ti\widehat\theta_k}){\lazyphi{k}{-}}\Bigr) }_2.
\end{align*}
We pick a threshold angle $\theta_0$ satisfying that
$0 < \theta_0 \leq \frac{\pi}{2}$ and separate the sum into two parts, where
the first part is for angles $0 < \theta_k \leq \theta_0$ and the second part
is for $\theta_0 < \theta_k \leq \pi$.  We give an upper bound on the
$\ell_2$-norm for each of these two parts in Facts~\ref{factsmall}
and~\ref{factlarge}, respectively.

\begin{fact}
  \label{factsmall}
  For $0 < \theta_0 \leq \frac{\pi}{2}$ and all $t \geq 0$,
  \begin{equation*}
    \dnorm[\Bigg]{\frac{1}{\sqrt{2}} \sum_{0 < \theta_k \leq \theta_0}
      \alpha_k \Bigl((e^{ti\theta_k} - e^{ti\widehat\theta_k})\lazyphi{k}{+} +
      (e^{-ti\theta_k} - e^{-ti\widehat\theta_k})\lazyphi{k}{-}\Bigr)}_2 \leq
    \frac{8t}{N-1} \sin\biggl(\frac{\theta_0}{2}\biggr).
  \end{equation*}
\end{fact}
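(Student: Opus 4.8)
The plan is to bound each term in the sum by controlling the difference $|e^{ti\theta_k} - e^{ti\widehat\theta_k}|$ via the gap $|\theta_k - \widehat\theta_k|$, and then relating the gap to the angle $\theta_k$ itself. First I would recall from Eq.~(\ref{equation of eval}) that the eigenvalues of $\dm{\liw}$ and $\dm{\iw}$ are related by $\widehat\lambda_k = \frac{N}{N+1}\lambda_k + \frac{1}{N+1}$, so $\widehat\lambda_k = \cos\widehat\theta_k$ and $\lambda_k = \cos\theta_k$ with both angles in $[0,\pi]$ (or the appropriate half-range). Writing $1 - \widehat\lambda_k = \frac{N}{N+1}(1-\lambda_k)$, i.e. $1 - \cos\widehat\theta_k = \frac{N}{N+1}(1-\cos\theta_k)$, which in half-angle form is $\sin^2(\widehat\theta_k/2) = \frac{N}{N+1}\sin^2(\theta_k/2)$. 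From this I can extract that $\sin(\widehat\theta_k/2) \leq \sin(\theta_k/2)$, so $\widehat\theta_k \leq \theta_k$ in the small-angle regime, and more importantly that $|\theta_k - \widehat\theta_k|$ is small relative to $\theta_k$. A clean way: the function $\theta \mapsto \sin(\theta/2)$ has derivative bounded below, so $\widehat\theta_k \geq \theta_k \cdot \sqrt{N/(N+1)}$ roughly, giving $\theta_k - \widehat\theta_k \leq \theta_k(1 - \sqrt{N/(N+1)}) \leq \frac{\theta_k}{2N}$ or so — but I should be a bit careful near $\theta_k = \pi$; restricting to $\theta_k \leq \theta_0 \leq \pi/2$ keeps everything in the well-behaved region.

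Next I would use the elementary bound $|e^{it\alpha} - e^{it\beta}| = |e^{it(\alpha-\beta)} - 1| = 2|\sin(t(\alpha-\beta)/2)| \leq t|\alpha - \beta|$, applied with $\alpha = \theta_k$, $\beta = \widehat\theta_k$. Combined with $|\theta_k - \widehat\theta_k| \lesssim \theta_k/N$ and then $\theta_k \lesssim 2\sin(\theta_k/2) \leq 2\sin(\theta_0/2)$ monotonically on $[0,\pi/2]$, each coefficient $|e^{ti\theta_k} - e^{ti\widehat\theta_k}|$ is at most something like $\frac{2t}{N}\cdot 2\sin(\theta_0/2)$. Then, since the $\lazyphi{k}{\pm}$ are orthonormal (eigenvectors of a unitary with distinct eigenvalues), the squared $\ell_2$-norm of the displayed vector is $\frac{1}{2}\sum_{0<\theta_k\leq\theta_0}|\alpha_k|^2\big(|e^{ti\theta_k}-e^{ti\widehat\theta_k}|^2 + |e^{-ti\theta_k}-e^{-ti\widehat\theta_k}|^2\big)$, which is at most $\big(\max_k |e^{ti\theta_k}-e^{ti\widehat\theta_k}|^2\big)\sum_k |\alpha_k|^2 \leq \big(\max_k |\cdots|\big)^2$, because $\sum_k |\alpha_k|^2 = \|\sqrt{\bar\uppi}\|^2 = 1$ from the decomposition in Eq.~(\ref{decompose pibar}). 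Taking square roots yields a bound of the shape $\frac{Ct}{N}\sin(\theta_0/2)$; tracking the constants honestly (the factor $2$ from $|e^{i\theta}-1|\leq 2$, the factor from $\theta_k \leq 2\sin(\theta_k/2)$, the $N$ vs $N-1$ bookkeeping since $\sum_k|\alpha_k|^2$ is over the uniform distribution on $N-1$ unmarked vertices) should land at $\frac{8t}{N-1}\sin(\theta_0/2)$ as claimed.

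The main obstacle I anticipate is getting the bound on $|\theta_k - \widehat\theta_k|$ with the right power of $N$ and with a clean absolute constant, uniformly over the range $0 < \theta_k \leq \theta_0$ — in particular making sure the estimate $\sin^2(\widehat\theta_k/2) = \frac{N}{N+1}\sin^2(\theta_k/2)$ really does give $\theta_k - \widehat\theta_k = O(\theta_k/N)$ and not something weaker, and handling the concavity/monotonicity of $\arcsin$ carefully enough to pin down the constant $8$. Everything else — the orthonormality of the $\lazyphi{k}{\pm}$, the triangle/Lipschitz bound on the exponentials, and $\sum_k|\alpha_k|^2 = 1$ — is routine. A secondary but minor point is confirming that the phases of $\lazyphi{k}{\pm}$ are chosen consistently with those of $\ket{\phi_k^\pm}$ so that the isometry $\isoone$ acts as stated and the cross terms vanish; this is already set up in the paragraph preceding the Fact.
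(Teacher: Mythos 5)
Your plan matches the paper's proof essentially step for step: orthonormality of the $\lazyphi{k}{\pm}$, bounding the sum by the maximum coefficient times $\sqrt{\sum_k|\alpha_k|^2}\le 1$, and converting $|e^{ti\theta_k}-e^{ti\widehat\theta_k}|$ into a bound of the form $\frac{Ct}{N}\sin(\theta_0/2)$ via the eigenvalue relation $\widehat\lambda_k=\frac{N}{N+1}\lambda_k+\frac{1}{N+1}$. The one caveat sits exactly at the step you flag: concavity in fact gives $\widehat\theta_k\le\sqrt{N/(N+1)}\,\theta_k$ (your heuristic inequality reversed), so the required \emph{lower} bound on $\widehat\theta_k$ needs a mean-value-type estimate; the paper uses $(1-\frac{2}{N+1})\theta_k\le\widehat\theta_k\le\theta_k$ for $\theta_k\le\pi/2$ together with $\sin(ax)\le 2a\sin(x)$ on $[0,\pi/4]$, which pins down the constant~$8$.
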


\begin{proof}
  \begin{align*}
    & \hphantom{{}={}}
      \dnorm[\Bigg]{\frac{1}{\sqrt{2}} \sum_{0 < \theta_k \leq \theta_0} \alpha_k \Bigl((e^{ti\theta_k} - e^{ti\widehat\theta_k})\lazyphi{k}{+} + (e^{-ti\theta_k} - e^{-ti\widehat\theta_k})\lazyphi{k}{-}\Bigr) }_2 \\
    & = \frac{1}{\sqrt{2}} \sqrt{ \sum_{0 < \theta_k \leq \theta_0} \dabs{\alpha_k}^2 \Bigl(\dabs{e^{ti\theta_k} - e^{ti\widehat\theta_k}}^2  + \dabs{e^{-ti\theta_k} - e^{-ti\widehat\theta_k}}^2  \Bigr)}\\
    & \leq \max_{\theta_k \leq \theta_0} \,\dabs[\big]{e^{ti\theta_k} - e^{ti\widehat\theta_k}} \sqrt{\sum_{0 < \theta_k \leq \theta_0} \dabs{\alpha_k}^2}\\
    & \leq \max_{\theta_k \leq \theta_0}\, \dabs[\big]{e^{ti\theta_k} - e^{ti\widehat\theta_k}}\\
    & = \max_{\theta_k \leq \theta_0} \,\dabs[\bigg]{
      2\sin\biggl(t\frac{\theta_k - \widehat\theta_k}{2}\biggr)}\\
    & = \max_{\theta_k \leq \theta_0} \,\dabs[\bigg]{
      2\sin\biggl(\frac{2t}{N+1}\frac{N+1}{2}\frac{\theta_k - \widehat\theta_k}{2}\biggr)}\\
    & \leq \frac{8t}{N-1}
      \sin\bigg(\frac{\theta_0}{2}\bigg). 
  \end{align*}
  In the last inequality, by Eq.~(\ref{equation of eval}), since
  $0 < \theta_k \leq \theta_0 \leq \frac{\pi}{2}$, then
  $(1-\frac{2}{N+1})\theta_k \leq \widehat\theta_k \leq \theta_k$, which
  implies that
  $0 \leq \frac{N+1}{2} \frac{\theta_k - \widehat\theta_k}{2} \leq
  \frac{\theta_0}{2}$.  Finally $\sin(ax) \leq 2a \sin(x)$ for all
  $0 \leq x \leq \pi/4$ and all $a \geq 0$.
\end{proof}

\begin{fact}
  \label{factlarge}
  For $0 < \theta_0 \leq \frac{\pi}{2}$ and all $t \geq 0$,
  \begin{equation*}
    \dnorm[\Bigg]{\frac{1}{\sqrt{2}} \sum_{\theta_0 < \theta_k \leq \pi} 
      \alpha_k \Bigl((e^{ti\theta_k} - e^{ti\widehat\theta_k})\lazyphi{k}{+} 
      + (e^{-ti\theta_k} - e^{-ti\widehat\theta_k})\lazyphi{k}{-}\Bigr)}_2 
    \leq \frac{2}{\sqrt{(1-\cos\theta_0)(N-1)}}.
  \end{equation*}
\end{fact}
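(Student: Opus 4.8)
The plan is to run the argument of Fact~\ref{factsmall} with the two factors handled in the opposite way. There one used a uniform bound on $\dabs{e^{ti\theta_k}-e^{ti\widehat\theta_k}}$, but that bound grows with~$t$ and is worthless once $\theta_k$ is bounded away from~$0$; instead I would here keep only the trivial estimate $\dabs{e^{ti\theta_k}-e^{ti\widehat\theta_k}}\le 2$ and recover the decay from the observation that the spectral weight $\dabs{\alpha_k}^2$ that $\sqrt{\bar{\uppi}}$ carries on the large angles is of order~$1/N$.

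In detail: since the eigenvectors $\lazyphi{k}{\pm}$ of $\lqiw$ are orthonormal, the squared $\ell_2$-norm on the left is $\tfrac12\sum_{\theta_0<\theta_k\le\pi}\dabs{\alpha_k}^2\bigl(\dabs{e^{ti\theta_k}-e^{ti\widehat\theta_k}}^2+\dabs{e^{-ti\theta_k}-e^{-ti\widehat\theta_k}}^2\bigr)$, which after bounding each phase difference by~$2$ is at most $4\sum_{\theta_0<\theta_k\le\pi}\dabs{\alpha_k}^2$. Because $\lambda_k=\cos\theta_k$ and cosine is decreasing on $[0,\pi]$, every index with $\theta_0<\theta_k\le\pi$ satisfies $1-\lambda_k>1-\cos\theta_0>0$, hence $1\le\frac{1-\lambda_k}{1-\cos\theta_0}$, so $\sum_{\theta_0<\theta_k\le\pi}\dabs{\alpha_k}^2\le\frac1{1-\cos\theta_0}\sum_{\lambda_k\ne1}\dabs{\alpha_k}^2(1-\lambda_k)$, where I also use $1-\lambda_k\ge0$ for all~$k$. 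The whole estimate therefore reduces to proving $\sum_{\lambda_k\ne1}\dabs{\alpha_k}^2(1-\lambda_k)=\frac1{N-1}$.

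This identity is the only step that requires an actual computation, and it is the place to be careful; the rest is immediate. Recalling $\sqrt{\bar{\uppi}}=\sum_k\alpha_k\ket{\lambda_k}$ from Eq.~(\ref{decompose pibar}), with $\{\ket{\lambda_k}\}$ an orthonormal eigenbasis of $\dm{\iw}$ and $\lambda_k$ its (real) eigenvalues, the left side equals $1-\bra{\sqrt{\bar{\uppi}}}\dm{\iw}\ket{\sqrt{\bar{\uppi}}}$. I would evaluate the inner product from the discriminant entries $\dm{\iw}_{xy}=\sqrt{\iw_{xy}\iw_{yx}}$: on a $d$-regular graph without self-loops this equals $\tfrac1d$ on each edge joining two unmarked vertices and $0$ on the diagonal, and since $\bar{\uppi}$ gives weight $\tfrac1{N-1}$ to each of the $N-1$ unmarked vertices and $0$ to the marked vertex~$m$, only edges between unmarked vertices contribute. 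Among the $Nd$ ordered pairs $(x,y)$ with $x\sim y$, exactly $2d$ involve~$m$, leaving $(N-2)d$ ordered pairs of adjacent unmarked vertices, so $\bra{\sqrt{\bar{\uppi}}}\dm{\iw}\ket{\sqrt{\bar{\uppi}}}=\frac1{N-1}\cdot\frac{(N-2)d}{d}=\frac{N-2}{N-1}$, and the identity follows. Chaining this with the inequalities above then gives a squared norm at most $\frac{4}{(1-\cos\theta_0)(N-1)}$, and taking square roots yields the claimed $\frac{2}{\sqrt{(1-\cos\theta_0)(N-1)}}$.
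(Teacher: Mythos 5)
Your proof is correct and follows essentially the same route as the paper: bound each phase difference trivially by $2$, then control the tail spectral weight via $\sum_{\theta_0<\theta_k\le\pi}\dabs{\alpha_k}^2\le\frac{1}{1-\cos\theta_0}\sum_k(1-\lambda_k)\dabs{\alpha_k}^2=\frac{1}{(1-\cos\theta_0)(N-1)}$. The only difference is cosmetic: the paper asserts $\sqrt{\bar{\uppi}}^\dagger\,\dm{\iw}\,\sqrt{\bar{\uppi}}=1-\frac{1}{N-1}$ directly, whereas you verify it by counting the $(N-2)d$ ordered pairs of adjacent unmarked vertices, and your count agrees.
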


\begin{proof}
  First simplify by bounding each factor
  $(e^{-ti\theta_k} - e^{-ti\widehat\theta_k})$ by its trivial upper bound
  of~2.
  \begin{align*}
    & \hphantom{{}={}}
      \dnorm[\Bigg]{\frac{1}{\sqrt{2}} \sum_{\theta_0 < \theta_k \leq \pi} \alpha_k \Bigl((e^{ti\theta_k} - e^{ti\widehat\theta_k})\lazyphi{k}{+} + (e^{-ti\theta_k} - e^{-ti\widehat\theta_k})\lazyphi{k}{-}\Bigr)}_2 \\
    & \leq \max_{\theta_0 < \theta_k} \,\dabs[\big]{e^{ti\theta_k} -
      e^{ti\widehat\theta_k}} \sqrt{\sum_{\theta_0 < \theta_k \leq \pi}
      \dabs{\alpha_k}^2} 
     \leq 2 \sqrt{\sum_{\theta_0 < \theta_k \leq \pi} \dabs{\alpha_k}^2}.
  \end{align*}
  We next upper bound the sum of the scalars $\alpha_k^2$ for the large angles
  $\theta_0 < \theta_k \leq \pi$ by $ \frac{1}{(1-\cos\theta_0)(N-1)}$.  For
  this, consider the quantum walk $\qiw$,
  \begin{equation*}
    \bra{\overline{init}_\mathrm{ip}} \qiw \initip 
    = \sqrt{\bar\uppi}^\dagger \cdot \dm{\iw} \cdot \sqrt{\bar\uppi} 
    = 1 - \frac{1}{N - 1}.
  \end{equation*}
  By Eq.~(\ref{decompose pibar}), we have
  $\sqrt{\bar\uppi}^\dagger \cdot \dm{\iw} \cdot \sqrt{\bar\uppi} =
  \sum_{k=1}^{\nn} \lambda_k \alpha_k^2$.  Using that
  $\sum_{k=1}^{\nn} \alpha_k^2 = 1$, we infer that
  \begin{align*}
    1 - \sqrt{\bar\uppi}^\dagger \cdot \dm{\iw} \cdot \sqrt{\bar\uppi} 
    & =  \sum_{k=1}^{\nn} \alpha_k^2  - \sum_{k=1}^{\nn} \lambda_k \alpha_k^2 = \sum_{k=1}^{\nn} (1-\lambda_k) \alpha_k^2\\
    & = \sum_{0 < \theta_k \leq \theta_0}(1-\lambda_k) \alpha_k^2 + \sum_{\theta_0 < \theta_k \leq \pi}(1-\lambda_k) \alpha_k^2 
      = \frac{1}{N - 1}.
  \end{align*}
  Since $-1 \leq \lambda_k \leq 1$ for all $k$, we have
  $\sum_{0 < \theta_k \leq \theta_0}(1-\lambda_k) \alpha_k^2 \geq 0$. For
  $\theta_0 < \theta_k \leq \pi$,\ i.e.\ $\lambda_k < \cos\theta_0$, we
  conclude that
  \begin{equation*}
    \sum_{\theta_0 < \theta_k \leq \pi}\alpha_k^2 
    \leq \frac{1}{1-\cos\theta_0} 
    \sum_{\theta_0 < \theta_k \leq \pi}(1-\lambda_k) \alpha_k^2 
    \leq \frac{1}{(1-\cos\theta_0)(N-1)}.
  \end{equation*}
\end{proof}

\begin{lemma}\label{lem5}
  Fix a constant $c\geq 1$, then for all $t \leq c\sqrt{\sht(\opp, \{m\})}$,
  \begin{equation*}
    \dnorm[\big]{ \isoone \cdot  \qiw^t \cdot \initip - \lqiw^t \cdot \initlip }_2 
    \in O\biggl(\frac{1}{N^{1/4}}\biggr).
  \end{equation*}
\end{lemma}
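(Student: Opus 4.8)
The plan is to bound the distance by splitting the spectral expansion of $\isoone \cdot \qiw^t \initip - \lqiw^t \initlip$ at the threshold angle $\theta_0$ and then optimizing the choice of $\theta_0$. The identity derived just before Fact~\ref{factsmall} writes this difference as $\frac{1}{\sqrt{2}}\sum_k \alpha_k\bigl((e^{ti\theta_k}-e^{ti\widehat\theta_k})\lazyphi{k}{+} + (e^{-ti\theta_k}-e^{-ti\widehat\theta_k})\lazyphi{k}{-}\bigr)$. First I would split the sum over $k$ according to whether $0 < \theta_k \leq \theta_0$ or $\theta_0 < \theta_k \leq \pi$, apply the triangle inequality, and invoke Fact~\ref{factsmall} and Fact~\ref{factlarge} on the two parts. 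This yields, for every $0 < \theta_0 \leq \frac{\pi}{2}$,
\[
  \dnorm[\big]{\isoone \cdot \qiw^t \initip - \lqiw^t \initlip}_2 \;\leq\; \frac{8t}{N-1}\sin\bigl(\tfrac{\theta_0}{2}\bigr) + \frac{2}{\sqrt{(1-\cos\theta_0)(N-1)}}.
\]

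Next I would pick $\theta_0$ to balance the two terms. Using $\sin(\theta_0/2)\leq \theta_0/2$ and $1-\cos\theta_0 = \Theta(\theta_0^2)$ on $(0,\frac{\pi}{2}]$, the right-hand side has order $\frac{t\theta_0}{N} + \frac{1}{\theta_0\sqrt{N}}$; over $\theta_0 \leq \frac{\pi}{2}$ this is minimized, when $t = \Omega(\sqrt{N})$, by taking $\theta_0$ of order $N^{1/4}/\sqrt{t}$, which makes both terms of order $\sqrt{t}/N^{3/4}$, whereas for $t = O(\sqrt{N})$ the choice $\theta_0 = \frac{\pi}{2}$ makes both terms $O(1/\sqrt{N})$. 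Since the hypothesis gives $t \leq c\sqrt{\sht(\opp, \{m\})}$ for a fixed constant $c$, in either case the bound becomes $O(\sqrt{t}/N^{3/4}) = O\bigl(\sht(\opp, \{m\})^{1/4}/N^{3/4}\bigr)$.

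To finish, I would invoke the bound $\sht(\opp, \{m\}) = O(N^2)$, equivalently $t = O(N)$. This is standard for the graph classes in question; in particular it holds for every vertex-transitive graph, since the hitting time of a single vertex from the stationary distribution is at most the maximum pairwise hitting time, which is $O(N^2)$. Substituting this into the estimate above gives $O(N^{1/2}/N^{3/4}) = O(N^{-1/4})$, which is the claimed bound.

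I expect the last step to be the delicate point: the $O(N^{-1/4})$ estimate genuinely relies on the a priori bound $t = O(N)$, because without a polynomial bound on the hitting time the small-angle contribution need not even be $o(1)$. It is also this bound that pins the exponent at $-\frac14$ rather than $-\frac12$: the intermediate estimate $\sqrt{t}/N^{3/4}$ is tight and is saturated, up to constants, by graphs such as the cycle, whose hitting time is $\Theta(N^2)$.
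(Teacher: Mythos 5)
Your proposal is correct and follows essentially the same route as the paper: split the spectral sum at a threshold angle, apply Facts~\ref{factsmall} and~\ref{factlarge}, choose $\theta_0$ so that both terms are of order $\sht(\opp,\{m\})^{1/4}/N^{3/4}$ (the paper's explicit choice $\cos\theta_0 = 1 - 2\sqrt{(N-1)/(16\,\sht(\opp,\{m\}))}$ is exactly your $\theta_0 \sim (N/\sht)^{1/4}$), and finish with $\sht(\opp,\{m\}) = O(N^2)$. One small correction: the paper's theorems cover regular locally arc-transitive graphs that need not be vertex-transitive (e.g.\ the Folkman graph), so the $O(N^2)$ hitting-time bound should be justified by regularity --- the paper cites Feige's bound $\sht(\opp,\{m\}) \leq 2N^2$ for regular graphs --- rather than by vertex-transitivity as you suggest.
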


\begin{proof}
  Choose the threshold angle $\theta_0$ such that
  $\cos\theta_0 = 1 - 2 \sqrt{\frac{N-1}{16\sht(\opp, \{m\})}}$.  Since the
  hitting time $\sht(\opp, \{m\})$ for a connected regular graph is at least
  $N-1$, the threshold angle is well-defined and satisfies that
  $0< \theta_0 \leq \pi/2$, and thus $0 \leq \cos\theta_0 < 1$.  

  Apply the triangle inequality on Facts~\ref{factsmall} and~\ref{factlarge}, and
  substitute
  $\sin\bigl(\frac{\theta_0}{2}\bigr) = \Bigl(\frac{N-1}{16\sht(\opp,
    \{m\})}\Bigr)^{{1/4}}$.
  \begin{align*}
    & \hphantom{{}={}}
    \dnorm[\big]{ \isoone \cdot  \qiw^t \cdot \initip - \lqiw^t \cdot \initlip }_2\\
    &=\dnorm[\Bigg]{\frac{1}{\sqrt{2}} 
      \sum_{0 < \theta_k \leq \pi} \alpha_k \Bigl((e^{ti\theta_k} - e^{ti\widehat\theta_k})\lazyphi{k}{+} 
      + (e^{-ti\theta_k} - e^{-ti\widehat\theta_k})\lazyphi{k}{-}\Bigr)}_2 \\
    & \leq \frac{8t}{N-1}\sin\biggl(\frac{\theta_0}{2}\biggr) 
      + \frac{2}{\sqrt{(1-\cos\theta_0)(N-1)}}\\
    & = \frac{8t}{N-1} \biggl(\frac{N-1}{16\sht(\opp, \{m\})}\biggr)^{{1/4}} 
      +  \frac{\sqrt{2}}{\sqrt{N-1}}\biggl(\frac{16\sht(\opp, \{m\})}{N-1}\biggr)^{{1/4}}\\
    & \leq \frac{1}{(N-1)^{3/4}} \Biggl(\frac{8t}{\bigl(16\sht(\opp, \{m\})\bigr)^{1/4}} 
      + \sqrt{2} \bigl(16\sht(\opp, \{m\})\bigr)^{1/4} \Biggr)\\
    &\leq \frac{\sht(\opp, \{m\})^{1/4}}{(N-1)^{3/4}} \bigl(4c + 2\sqrt{2}\bigr)\\
    &\leq \frac{2^{1/4} \sqrt{N}}{(N-1)^{3/4}} \bigl(4c + 2\sqrt{2}\bigr)\\
    &\leq 9c\frac{\sqrt{N}}{(N-1)^{3/4}}\\
    &\in O\biggl(\frac{1}{N^{{1/4}}}\biggr).
  \end{align*}
  In the second last inequality, we apply the upper bound on the hitting time
  of random walks on regular graphs given in~\cite{Fei96}, which shows that
  $\sht(\opp, \{m\}) \leq 2N^2$.
\end{proof}

\begin{lemma}\label{lem6}
  Fix a constant $c\geq 1$, then for all $t \leq c\sqrt{\sht(\opp, \{m\})}$,
  \begin{equation*}
    \dnorm[\big]{\qiw^t \cdot \initip - \lqiw^t \cdot \initlip }_2 
    \in O\biggl(\frac{1}{N^{1/4}}\biggr).
  \end{equation*}
\end{lemma}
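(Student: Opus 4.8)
The plan is to obtain Lemma~\ref{lem6} immediately from Lemmas~\ref{lem4} and~\ref{lem5} by a single application of the triangle inequality. Both $\qiw^t\initip$ and $\lqiw^t\initlip$ are vectors in the common Hilbert space $\mathcal{H}_N\otimes\mathcal{H}_N$, and the partial isometry $\isoone$ acts within this space, so $\isoone\cdot\qiw^t\initip$ is a well-defined vector there and may be inserted as an intermediate point:
\begin{equation*}
  \dnorm[\big]{\qiw^t\initip-\lqiw^t\initlip}_2
  \leq \dnorm[\big]{\qiw^t\initip-\isoone\cdot\qiw^t\initip}_2
  + \dnorm[\big]{\isoone\cdot\qiw^t\initip-\lqiw^t\initlip}_2 .
\end{equation*}

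First I would bound the first term on the right-hand side using Lemma~\ref{lem4}, which holds for all $t\geq 0$ and yields $O(1/\sqrt{N})$. Next I would bound the second term using Lemma~\ref{lem5}; this requires $t \leq c\sqrt{\sht(\opp,\{m\})}$, which is exactly the range of $t$ assumed in the statement of Lemma~\ref{lem6}, and yields $O(1/N^{1/4})$. Adding the two contributions and using that $1/\sqrt{N}=O(1/N^{1/4})$, the sum is $O(1/N^{1/4})$, which is the claimed bound.

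There is essentially no technical obstacle here: the content of the lemma is already packed into Lemmas~\ref{lem4} and~\ref{lem5}, and this step is merely their combination. The only points worth a line of justification are (i) that the displayed decomposition is legitimate, i.e.\ that $\isoone\cdot\qiw^t\initip$ lies in $\mathcal{H}_N\otimes\mathcal{H}_N$ so the ordinary triangle inequality for the $\ell_2$-norm applies, and (ii) that the hypothesis of Lemma~\ref{lem5} on the range of $t$ coincides with that of Lemma~\ref{lem6}; both are immediate. Once Lemma~\ref{lem6} is established, Theorem~\ref{thm2} follows by transporting this bound through the exact identifications $\lqw^t\initlazy=\vlqw^t\initlazy$ (Lemma~\ref{lem1}) and $\vlqw=\isoe\cdot\lqiw\cdot\isoe^\dagger$ (Lemma~\ref{lem2}), which preserve the $\ell_2$-norm, as indicated in the text.
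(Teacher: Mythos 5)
Your proof is correct and matches the paper's own argument exactly: the paper also obtains Lemma~\ref{lem6} by inserting $\isoone\cdot\qiw^t\initip$ as an intermediate point and applying the triangle inequality to Lemmas~\ref{lem4} and~\ref{lem5}. Your additional remarks on why the decomposition is legitimate and on the matching range of $t$ are fine but not needed.
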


\begin{proof}
  Applying the triangle inequality on Lemmas~\ref{lem4} and~\ref{lem5}.
\end{proof}

Theorem~\ref{thm2} follows from Lemmas~\ref{lem2} and~\ref{lem6}.

\section{On Locally Arc-Transitivity}
\label{sec_arc}

The main property that we have used in our proofs is locally arc-transitivity.
The graphs considered in~\cite{Won15b,Won18b,WZWY17arxiv,GK19,RW19b,RW20} are
all locally arc-transitive, as well as vertex-transitive.  This implies that
these graphs are also symmetric, as we now show.

A graph is \emph{symmetric} if for any two arcs $(u_1,v_1)$ and $(u_2,v_2)$,
there exists an automorphism that maps $(u_1,v_1)$ to $(u_2,v_2)$.  As
discussed in Sec.~\ref{sec_main}, a graph $G$ is locally arc-transitive if for
any vertex $u$ with neighbors $v_1$ and $v_2$, there exists an automorphism of
$G$ that maps the arc $(u,v_1)$ to the arc $(u,v_2)$.  Let us say that a graph
$G$ is \emph{locally arc-transitive at vertex $u$} if for any two neighbors
$v_1$ and $v_2$ of $u$, there exists an automorphism of $G$ that maps the arc
$(u,v_1)$ to the arc $(u,v_2)$.  A locally arc-transitive graph is then, by
definition, a graph that is locally arc-transitive at every vertex.

We now show that a graph $G$ is symmetric if and only if it is locally
arc-transitive at some vertex $u$ and vertex-transitive.  Trivially, if a
graph is symmetric, it satisfies the latter conditions.  Now, consider the
converse.  Let $(x_1,y_1)$ and $(x_2,y_2)$ be two arcs in~$G$.  Using
vertex-transitivity, let $\sigma_1$ be an automorphism that maps $x_1$ to $u$,
and let $\sigma_2$ be an automorphism that maps $x_2$ to $u$.  Using locally 
arc-transitivity at~$u$, let $\sigma_3$ be an automorphism that maps $u$ to
$u$, and that maps $\sigma_1(y_1)$ to $\sigma_2(y_2)$.  Then
$\sigma_2^{-1} \sigma_3 \sigma_1$ maps $(x_1,y_1)$ to $(x_2,y_2)$, and thus
$G$ is symmetric.

Locally arc-transitivity and vertex-transitivity are two distinct graph
properties, even when restricting to regular graphs.  There are regular graphs
that are locally arc-transitive but not vertex-transitive, such as the Folkman
graph~\cite{Fol67}.  Conversely, there are regular graphs that are
vertex-transitive but not locally arc-transitive, such as the M\"obius ladder
on 8 vertices, or the Cartesian product of a 3-cycle and a 2-path.

There are two obvious cases of graph properties that are not included in our
main theorems.  The first case is graphs that are locally arc-transitive but
not regular, which includes the bipartite graphs considered in~\cite{RW19b}.
In this case, if we equip each vertex with a self-loop of weight proportional
to its degree, the equivalence to quantum interpolated walks still holds,
though the corresponding interpolated walk~$\iw$ may use a value of~$s$
different from the value $s = 1 - \frac{\uppi_m}{1-\uppi_m}$ picked in a
quantum interpolated walk, depending on the proportionality factor.

The second case is graphs that are regular and vertex-transitive, but not
locally arc-transitive.  In this case, our theorems do not apply, but a
partial analytical answer may be obtained by bounding the variation between
the amplitudes of the neighbors of the marked vertex after each iteration of
the walk.

\section{Conclusion}
\label{sec_conclusion}

We analytically prove that lackadaisical quantum walks can find a unique
marked vertex on any regular locally arc-transitive graph with constant
success probability in $O(\sqrt{\sht})$ steps.  In our proof, we establish and
use relationships between lackadaisical quantum walks and quantum interpolated
walks for any regular locally arc-transitive graph.  We also prove that self-loops of
weight $\ell$ correspond to an interpolation parameter of
$s = 1- \frac{\ell}{d}$.

Our results prove several speculations and numerical findings in previous
work, including the conjectures that lackadaisical quantum walks can find a
unique marked vertex with constant success probability on the
torus~\cite{Won18b, WZWY17arxiv}, the cycle~\cite{GK19}, Paley graphs, some
Latin square graphs, Johnson graphs, and the hypercube~\cite{RW20}.

\section*{Acknowledgments}
The authors are grateful to Dante Bencivenga, Xining Chen, Janet Leahy and
Shang Li for discussions. We are grateful to Mason Rhodes and Thomas Wong for
discussions on the results in their work~\cite{RW20}, and to an anonymous
referee for helpful comments.  This work was supported in part by the Alberta
Graduate Excellence Scholarship program (AGES), the Natural Sciences and
Engineering Research Council of Canada (NSERC), and the University of
Calgary's Program for Undergraduate Research Experience (PURE).


\begin{thebibliography}{WZWY17}

\bibitem[AK15]{AK15}
Andris Ambainis and M{\=a}rti{\c n}{\v s} Kokainis.
\newblock Analysis of the extended hitting time and its properties, 2015.
\newblock Poster presented at the 18th Annual Conference on Quantum Information
  Processing, QIP'15, Sydney, Australia.
\newblock URL: \url{http://www.quantum-lab.org/qip2015/posters/2-Ambainis.pdf}.

\bibitem[AKR05]{AKR05}
Andris Ambainis, Julia Kempe, and Alexander Rivosh.
\newblock Coins make quantum walks faster.
\newblock In {\em Proceedings of the 16th Annual ACM-SIAM Symposium on Discrete
  Algorithms}, SODA'05, pages 1099--1108, 2005.
\newblock URL: \url{http://dl.acm.org/citation.cfm?id=1070432.1070590}, \href
  {http://arxiv.org/abs/quant-ph/0402107} {\path{arXiv:quant-ph/0402107}}.

\bibitem[BHMT02]{BHMT02}
Gilles Brassard, Peter H{\o}yer, Michele Mosca, and Alain Tapp.
\newblock Quantum amplitude amplification and estimation.
\newblock In {\em Quantum Computation and Information}, volume 305 of {\em AMS
  Contemporary Mathematics}, pages 53--74. American Mathematical Society, 2002.
\newblock \href {http://arxiv.org/abs/quant-ph/0005055}
  {\path{arXiv:quant-ph/0005055}}.

\bibitem[DH17]{DH17b}
C{\u a}t{\u a}lin Dohotaru and Peter H{\o}yer.
\newblock Controlled quantum amplification.
\newblock In {\em Proceedings of the 44th International Colloquium on Automata,
  Languages, and Programming}, volume~80 of {\em ICALP'17}, pages 18:1--18:13,
  Dagstuhl, Germany, July 2017. Schloss Dagstuhl--Leibniz-Zentrum f{\"u}r
  Informatik.
\newblock \href {https://doi.org/10.4230/LIPIcs.ICALP.2017.18}
  {\path{doi:10.4230/LIPIcs.ICALP.2017.18}}.

\bibitem[Fei96]{Fei96}
Uriel Feige.
\newblock Collecting coupons on trees, and the cover time of random walks.
\newblock {\em Computational Complexity}, 6:341--356, 1996.
\newblock \href {https://doi.org/10.1007/BF01270386}
  {\path{doi:10.1007/BF01270386}}.

\bibitem[Fol67]{Fol67}
Jon Folkman.
\newblock Regular line-symmetric graphs.
\newblock {\em Journal of Combinatorial Theory}, 3(3):215--232, 1967.
\newblock \href {https://doi.org/10.1016/S0021-9800(67)80069-3}
  {\path{doi:10.1016/S0021-9800(67)80069-3}}.

\bibitem[GK19]{GK19}
Pulak~Ranjan Giri and Vladimir Korepin.
\newblock Lackadaisical quantum walk for spatial search.
\newblock {\em Modern Physics Letters A}, 33(1):2050043, December 2019.
\newblock \href {http://arxiv.org/abs/1811.06169} {\path{arXiv:1811.06169}},
  \href {https://doi.org/10.1142/S0217732320500431}
  {\path{doi:10.1142/S0217732320500431}}.

\bibitem[Gro96]{Gro96}
Lov~K. Grover.
\newblock A fast quantum mechanical algorithm for database search.
\newblock In {\em Proceedings of the 28th Annual ACM Symposium on Theory of
  Computing}, STOC'96, pages 212--219, 1996.
\newblock \href {http://arxiv.org/abs/quant-ph/9605043}
  {\path{arXiv:quant-ph/9605043}}, \href
  {https://doi.org/10.1145/237814.237866} {\path{doi:10.1145/237814.237866}}.

\bibitem[IKS05]{IKS05}
Norio Inui, Norio Konno, and Etsuo Segawa.
\newblock One-dimensional three-state quantum walk.
\newblock {\em Physical Review E}, 72(5):056112, November 2005.
\newblock \href {http://arxiv.org/abs/quant-ph/0507207}
  {\path{arXiv:quant-ph/0507207}}, \href
  {https://doi.org/10.1103/PhysRevE.72.056112}
  {\path{doi:10.1103/PhysRevE.72.056112}}.

\bibitem[KMOR16]{KMOR16}
Hari Krovi, Fr{\'e}d{\'e}ric Magniez, M{\=a}ris Ozols, and J{\'e}r{\'e}mie
  Roland.
\newblock Quantum walks can find a marked element on any graph.
\newblock {\em Algorithmica}, 74(2):851--907, 2016.
\newblock \href {http://arxiv.org/abs/1002.2419} {\path{arXiv:1002.2419}},
  \href {https://doi.org/10.1007/s00453-015-9979-8}
  {\path{doi:10.1007/s00453-015-9979-8}}.

\bibitem[Nah19]{Nah19}
Nikolajs Nahimovs.
\newblock Lackadaisical quantum walks with multiple marked vertices.
\newblock In {\em Proceedings of the 45th International Conference on Current
  Trends in Theory and Practice of Computer Science}, pages 368--378, January
  2019.
\newblock \href {http://arxiv.org/abs/1808.00672} {\path{arXiv:1808.00672}},
  \href {https://doi.org/10.1007/978-3-030-10801-4_29}
  {\path{doi:10.1007/978-3-030-10801-4_29}}.

\bibitem[RW19]{RW19b}
Mason~L. Rhodes and Thomas~G. Wong.
\newblock Search by lackadaisical quantum walks with nonhomogeneous weights.
\newblock {\em Physical Review A: General Physics}, 100(4):042303, October
  2019.
\newblock \href {https://doi.org/10.1103/PhysRevA.100.042303}
  {\path{doi:10.1103/PhysRevA.100.042303}}.

\bibitem[RW20]{RW20}
Mason~L. Rhodes and Thomas~G. Wong.
\newblock Search on vertex-transitive graphs by lackadaisical quantum walk,
  February 2020.
\newblock \href {http://arxiv.org/abs/2002.11227} {\path{arXiv:2002.11227}}.

\bibitem[Sze04]{Sze04}
Mario Szegedy.
\newblock Quantum speed-up of {M}arkov chain based algorithms.
\newblock In {\em Proceedings of the 45th IEEE Symposium on Foundations of
  Computer Science}, FOCS'04, pages 32--41, 2004.
\newblock \href {https://doi.org/10.1109/FOCS.2004.53}
  {\path{doi:10.1109/FOCS.2004.53}}.

\bibitem[Won15]{Won15b}
Thomas~G. Wong.
\newblock {G}rover search with lackadaisical quantum walks.
\newblock {\em Journal of Physics A: Mathematical and Theoretical},
  48(43):435304, October 2015.
\newblock \href {https://doi.org/10.1088/1751-8113/48/43/435304}
  {\path{doi:10.1088/1751-8113/48/43/435304}}.

\bibitem[Won17]{Won17a}
Thomas~G. Wong.
\newblock Coined quantum walks on weighted graphs.
\newblock {\em Journal of Physics A: Mathematical and Theoretical}, 50(47),
  October 2017.
\newblock \href {https://doi.org/10.1088/1751-8121/aa8c17}
  {\path{doi:10.1088/1751-8121/aa8c17}}.

\bibitem[Won18a]{Won18a}
Thomas~G. Wong.
\newblock Corrigendum: {G}rover search with lackadaisical quantum walks.
\newblock {\em Journal of Physics A: Mathematical and Theoretical}, 51(6),
  January 2018.
\newblock \href {https://doi.org/10.1088/1751-8121/aa8f54}
  {\path{doi:10.1088/1751-8121/aa8f54}}.

\bibitem[Won18b]{Won18b}
Thomas~G. Wong.
\newblock Faster search by lackadaisical quantum walk.
\newblock {\em Quantum Information Processing}, 17(3):68, February 2018.
\newblock \href {http://arxiv.org/abs/1706.06939} {\path{arXiv:1706.06939}},
  \href {https://doi.org/10.1007/s11128-018-1840-y}
  {\path{doi:10.1007/s11128-018-1840-y}}.

\bibitem[WZWY17]{WZWY17arxiv}
Huiquan Wang, Jie Zhou, Junjie Wu, and Xun Yi.
\newblock Adjustable self-loop on discrete-time quantum walk and its
  application in spatial search, July 2017.
\newblock \href {http://arxiv.org/abs/1707.00601} {\path{arXiv:1707.00601}}.

\end{thebibliography}
\end{document}